\documentclass[11pt,letterpaper]{article}
\usepackage[margin=1in]{geometry}
\usepackage[T1]{fontenc}
\usepackage{lmodern,microtype}
\usepackage{needspace}
\usepackage{amsmath,amssymb,amsthm,mathtools}
\usepackage{graphicx}
\usepackage{tikz}
\usepackage{algorithm,algorithmic}
\usepackage{hyperref}
\hypersetup{hidelinks,
 pdftitle={Three-Color Free-Flood-It on Fixed-Height Grids Is Polynomial-Time Solvable},
 pdfauthor={Yuxuan Zhou},
 pdfkeywords={Free-Flood-It, graph algorithms, painting trees, weighted pushdown automata}}
\usepackage{cleveref}
\usepackage{booktabs,tabularx,longtable}
\usepackage{subcaption}
\usepackage{multirow}
\usepackage{thmtools,thm-restate}

\newtheorem{preliminarytheorem}{Preliminary Theorem}
\newtheorem{theorem}{Theorem}
\newtheorem{lemma}[theorem]{Lemma}
\newtheorem{corollary}[theorem]{Corollary}
\newtheorem{proposition}[theorem]{Proposition}
\newtheorem{definition}[theorem]{Definition}

\title{Three-Color Free-Flood-It on Fixed-Height Grids Is Polynomial-Time Solvable}
\author{Yuxuan Zhou\thanks{Key Laboratory of System Software (Chinese Academy of Sciences) and State Key Laboratory of Computer Science, Institute of Software, Chinese Academy of Sciences}\textsuperscript{\phantom{*},}\thanks{University of Chinese Academy of Sciences, Beijing 100080, China}\\\href{mailto:zhouyx@ios.ac.cn}{\texttt{zhouyx@ios.ac.cn}}}
\date{}

\begin{document}
\begin{titlepage}
\hypersetup{pageanchor=false}
\maketitle
\begin{abstract}
\normalsize
We give a deterministic algorithm for \textsc{Free-Flood-It} on rectangular grids $P_k\square P_n$ with at most three colors. For every fixed height $k$, it computes the minimum number of moves and an optimal sequence in $N^{O(k^2)}$ time, where $N=kn$. This resolves the previously open three-color case on complete $3\times n$ boards.

The proof uses a representation of flooding strategies as paintings by connected regions. We show that an optimal painting can be chosen so that its regions form a rooted tree with strong restrictions on the colors along ancestral paths. These restrictions make the part of the tree visible in any fixed-size connected window admit only polynomially many descriptions. The remaining common ancestors may form an arbitrarily long chain. We retain that chain on a stack and use a finite context-free recurrence to minimize the total painting cost without enumerating all possible stack contents. Connectivity information at the boundary ensures that the resulting local descriptions assemble into one valid global painting.

The argument also applies to graphs supplied with an ordering into connected layers of bounded size, provided every edge lies within one layer or joins consecutive layers. A family of three-row boards shows that the unbounded ancestor chains handled by the algorithm are necessary even for optimal paintings.

For any fixed height and palette size, we also give a deterministic EPTAS and a randomized sampling variant with an explicit failure-probability bound. These approximation results use a separate algorithm with single-exponential dependence on the move budget.
\end{abstract}

\thispagestyle{empty}
\end{titlepage}
\hypersetup{pageanchor=true}
\setcounter{page}{1}

\section{Introduction}
\label{sec:intro}
\textsc{Free-Flood-It} is a one-player combinatorial game on a vertex-colored graph. A move changes the color of one maximal monochromatic connected component, and the objective is to make the graph monochromatic in as few moves as possible. The player may choose a different component at every move. In the fixed-pivot version, by contrast, every move must affect one prescribed vertex. Clifford, Jalsenius, Montanaro, and Sach~\cite{clifford2012complexity} studied the complexity of these flood-filling games and formulated several questions about the effects of the palette size and board dimensions. Their free variant is the subject of this paper.

Clifford, Jalsenius, Montanaro, and Sach~\cite{clifford2012complexity} proved that three colors already give NP-hardness on square boards, whereas two-color Free-Flood-It is polynomial-time solvable. They also asked whether the free game on two-row boards admits a polynomial-time algorithm. Meeks and Scott~\cite{meeks2013complexity} resolved this question by separating fixed and unbounded palettes: on $2\times n$ boards their algorithm runs in $O(2^c n^{10})$ time, where $c$ is the number of colors, but the problem remains NP-hard when $c$ is unbounded. Thus the two-row problem is fixed-parameter tractable in $c$, and polynomial-time solvable for every fixed palette.

For taller strips, Meeks and Scott~\cite{meeks2012graphs} established NP-hardness with four colors on $3\times n$ boards and gave a constant additive approximation for every fixed height and palette size. These results left the three-color case on three-row boards unresolved, as also recorded by Fellows, Rosamond, Silva, and Souza~\cite{fellows2018survey}. Table~\ref{tab:known-results} summarizes the relevant earlier results, including graph restrictions and parameterized algorithms. The distinction between free and fixed flooding matters: their optima and complexity can differ~\cite{belmonte2019how}.

We resolve the three-color free game on complete $3\times n$ boards by a deterministic polynomial-time algorithm. We then extend the construction to every fixed height $k$, obtaining an exact algorithm with running time $N^{O(k^2)}$, where $N=kn$, that also returns an optimal flooding sequence. Here $k$ denotes height and the prefix 3 denotes the number of available colors. Every horizontal and vertical grid edge is present; the input coloring need not be proper and the final color is unrestricted. The result is an XP algorithm parameterized by height. It concerns free flooding and does not settle the corresponding fixed-pivot strip problem.

A small column boundary does not by itself give a small dynamic-programming state. One operation can serve regions on both sides of the boundary, so their costs cannot simply be added. Tree structure alone is also insufficient: the problem is NP-hard on three-colored trees~\cite{fellows2015trees}, despite the characterization of the optimum through spanning trees by Meeks and Scott~\cite{meeks2014spanning}. We instead use the equivalence between flooding and painting a prescribed coloring by connected strokes, developed by Rosenke and Scheibner~\cite{rosenke2026painting} and used in subsequent work on restricted graph classes~\cite{darmuntzel2026jewelry}. Our contribution is to combine polynomially many local descriptions of a painting tree with a stack that preserves shared strokes across columns.

The paper is organized as follows:
\begin{itemize}
\item Section~\ref{sec:prelim} introduces the notation, definitions, and painting equivalence, and summarizes prior complexity results. Section~\ref{sec:main-theorem} states the main results and outlines the algorithms and their correctness proofs.
\item Section~\ref{sec:algo-3xn} develops the algorithm and proofs for $3\times n$ boards. Section~\ref{sec:generalization} extends the argument to every fixed height and to graphs supplied with connected layers.
\item Section~\ref{sec:approximation} gives the budget-parameterized algorithm and the deterministic and randomized approximation schemes. Section~\ref{sec:conclusion} concludes with open problems.
\end{itemize}

\section{Notation, Definitions, and Preliminary Results}
\label{sec:prelim}
We work with finite, simple, undirected graphs. Our principal input is the nonempty rectangular grid $G=P_k\square P_n$, whose vertices are the pairs $(r,j)$ with $1\le r\le k$ and $1\le j\le n$. Two vertices are adjacent when their Manhattan distance is one, and $N=kn$ denotes the number of vertices. The given coloring $f:V(G)\to\{0,1,2\}$ need not be proper. A flooding move chooses a vertex and changes the color of its entire maximal monochromatic connected component to a different color. We write $F(G,f)$ for the minimum number of moves needed to make the graph monochromatic; the final color is not prescribed.

For comparison with the board notation of Clifford, Jalsenius, Montanaro, and Sach~\cite{clifford2012complexity}, let $\mathcal B_{k,n,c}$ be the set of complete $k\times n$ boards with palette $\{0,\ldots,c-1\}$. A board $B$ specifies a coloring $f_B(r,j)=B[r,j]$, and we write $m(B)=F(P_k\square P_n,f_B)$ for its free-flooding optimum, with that palette understood. Thus $k$ always denotes height, $n$ denotes width, and $c$ denotes the number of available colors; the main theorem concerns $c=3$. Relabeling the colors by $1,\ldots,c$ changes nothing. We allow $k,n,c\ge1$ and monochromatic inputs, for which $m(B)=0$. The notation $\mathcal B_{k,n,c}$ refers to complete rectangles, whereas the graph notation $F(G,f)$ will also be used for the graph classes considered later.

It is useful to study the reverse task of producing $f$ from a blank graph. A \emph{painting stroke} $(S,a)$ assigns color $a$ to every vertex of a nonempty connected set $S$, overwriting any earlier colors there. A painting plan must paint every vertex and finish with coloring $f$. Let $p(G,f)$ be the smallest number of strokes in such a plan. Unlike a flooding move, a painting stroke is not required to act on a maximal monochromatic component. The relation between the two optimization problems is nevertheless exact.

\begin{preliminarytheorem}[Painting equivalence {\cite[Theorem 6]{rosenke2026painting}}]
\label{thm:painting}
For a nonempty connected graph $G$ and a prescribed coloring $f$, $p(G,f)=F(G,f)+1$.
\end{preliminarytheorem}

One direction can be seen directly. Given a flooding sequence, paint the entire graph in its final color and then undo its moves in reverse order, painting each affected component in the color it had before the corresponding move. This uses one more stroke than the number of flooding moves. The converse requires a suitable normalization of painting plans. Later, Lemma~\ref{lem:direct-reversal} establishes the stronger fact that every tree produced by our algorithm can be reversed directly into legal flooding moves. Thus computing a painting optimum will also yield an optimal flooding sequence.

Our algorithm reads the grid from left to right. We denote column $j$ by $B_j=\{(r,j):1\le r\le k\}$ and the part read so far by $V_{\le j}=\bigcup_{t\le j}B_t$. For a vertex set $S$, its boundary $\partial S$ consists of the vertices outside $S$ that have a neighbor in $S$. We also write $C_a=f^{-1}(a)$ for a color class of the given target coloring. The height is always denoted by $k$; the number of vertices in a local window will be denoted by $s$. Since painting vertices individually uses $N$ strokes, we may restrict attention to painting plans of cost at most $M=N$.

The trees used below have one node for each painting stroke. The node responsible for the final color of a graph vertex $v$ will be denoted by $\pi(v)$ and called its \emph{owner}. Several vertices may have the same owner. When a particular set of graph vertices is under consideration, we retain each vertex as a named mark at its owner; a tree node is marked if at least one such vertex is assigned to it. Consequently, equality of two local trees must preserve the colors of their nodes and the owner of each named graph vertex. No symmetry of the grid is used to identify vertices. In a rooted tree, ancestry includes the node itself unless stated to be strict. We write $\operatorname{anc}_T(x)$ for the path from the root to $x$ and abbreviate lowest common ancestor as LCA.

Connectivity at a column boundary is recorded by a partition. A partition $\rho$ refines a partition $\sigma$ if every block of $\rho$ is contained in a block of $\sigma$. We write $\operatorname{cc}(H)$ for the partition of $V(H)$ into the connected components of $H$. Restricting such a partition to a subset means intersecting every block with that subset and discarding empty intersections. This notation does not by itself record components that have disappeared behind the boundary. The algorithm will therefore maintain an additional invariant ensuring that an unfinished region cannot lose a component in this way.

All costs are nonnegative integers encoded in binary. The decision version, with a move bound $b$, follows by computing $F(G,f)$ and comparing it with $b$. A connected graph can always be flooded in at most $N-1$ moves by repeatedly merging adjacent components, so the magnitude of $b$ creates no pseudopolynomial dependence.

Table~\ref{tab:known-results} compares selected prior results, results proved in this paper, and a remaining open question for the free variant. Here $c$ is the palette size and ``unbounded'' means that it may grow with the input. All graphs are connected; no proper-coloring assumption is imposed. NP-completeness refers to the decision problem with a move budget. The cited hardness results yield NP-completeness because a successful sequence of at most $N-1$ moves is a polynomial certificate. The planar bipartite row is an immediate consequence of square-grid hardness, rather than a separate reduction.

\begingroup
\renewcommand{\arraystretch}{1.16}
\setlength{\tabcolsep}{4pt}
\begin{longtable}{@{}>{\raggedright\arraybackslash}p{0.15\textwidth}>{\raggedright\arraybackslash}p{0.12\textwidth}>{\raggedright\arraybackslash}p{0.27\textwidth}>{\raggedright\arraybackslash}p{\dimexpr0.46\textwidth-24pt\relax}@{}}
\caption{Free-Flood-It: prior results, this paper's results, and an open direction. ``Previously open'' describes the status before this work; ``Open'' denotes a remaining question. Polynomial-time results are exact unless labeled as approximations.}\label{tab:known-results}\\
\toprule
Graph class & Colors & Additional restrictions & Result \\
\midrule
\endfirsthead
\caption[]{Free-Flood-It: results and open direction (continued).}\\
\toprule
Graph class & Colors & Additional restrictions & Result \\
\midrule
\endhead
\midrule
\multicolumn{4}{r}{Continued on the next page}\\
\endfoot
\bottomrule
\endlastfoot
General graphs & $c\le2$ & None & Polynomial time~\cite{meeks2012graphs}. \\
\addlinespace
General graphs & Unbounded & A uniformly polynomial number of connected vertex sets & Polynomial time~\cite{meeks2014spanning}. \\
\addlinespace
General graphs & Unbounded & Move budget $b$ and clique-width $w$ as parameters & FPT in $(b,w)$~\cite[Theorem 4.1]{belmonte2019how}. \\
\addlinespace
Paths & Unbounded & Equivalently, $1\times n$ boards & Polynomial time~\cite{meeks2012graphs}. \\
\addlinespace
Trees & $c=3$ & No further restriction & NP-complete~\cite{fellows2015trees}. \\
\addlinespace
Grids & $c=3$ & Complete $n\times n$ boards & NP-complete~\cite{clifford2012complexity}. \\
\addlinespace
Planar graphs & $c=3$ & Bipartite; maximum degree at most four & NP-complete, already on square grids~\cite{clifford2012complexity}. \\
\addlinespace
Grids & $c$ & Complete $2\times n$ boards & $O(2^c n^{10})$ time; FPT in $c$~\cite[Theorem 3.1]{meeks2013complexity}. \\
\addlinespace
Grids & Unbounded & Complete $2\times n$ boards & NP-complete~\cite{meeks2013complexity}. \\
\addlinespace
Grids & $c=4$ & Complete $3\times n$ boards & NP-complete~\cite[Theorem 5.4]{meeks2012graphs}. \\
\addlinespace
Grids & Fixed $c$ & Complete $k\times n$ boards; fixed $k$ & Polynomial-time constant additive approximation~\cite[Corollary 4.10]{meeks2012graphs}. \\
\midrule
Grids & $c=3$ & Complete $3\times n$ boards & Previously open~\cite{fellows2018survey}; in P by Theorem~\ref{thm:three-rows} (Ours). \\
\addlinespace
Grids & $c=3$ & Complete $k\times n$ boards; fixed $k$ & $N^{O(k^2)}$ time; XP in $k$, Theorem~\ref{thm:main} (Ours). \\
\addlinespace
Graphs with supplied layers & $c=3$ & Connected layers of size $h$; edges only within or between consecutive layers & $N^{O(h^2)}$ time, Corollary~\ref{cor:layers} (Ours). \\
\addlinespace
Grids & Fixed $c$ & Complete $k\times n$ boards; fixed $k$; budget $b$ & $2^{O_{k,c}(b)}N$ time, Proposition~\ref{prop:budget} (Ours). \\
\addlinespace
Grids & Fixed $c$ & Complete $k\times n$ boards; fixed $k$ & Deterministic EPTAS, Theorem~\ref{thm:main-budget-approximation} (Ours). \\
\addlinespace
Grids & Fixed $c$ & Complete $k\times n$ boards; fixed $k$ & Randomized $(1+\varepsilon)$ approximation with success probability $1-\delta$, Theorem~\ref{thm:main-randomized} (Ours). \\
\midrule
Grids & $c=3$ & Complete $k\times n$ boards; parameter $k$ & Open: an exact $g(k)N^{O(1)}$ algorithm? Section~\ref{sec:conclusion}. \\
\end{longtable}
\endgroup

The label ``(Ours)'' identifies statements proved in this paper, including consequences and explicit algorithmic refinements; it does not assert that every underlying tractability or approximation principle is new. In particular, the budget and approximation rows build on the cited earlier algorithms. For the layer row, consecutive layers must also be joined by an edge, as required in Corollary~\ref{cor:layers}. The height-parameter question asks for an exponent independent of $k$, which the XP bound does not provide.

In the parameterized row, clique-width is the minimum number of vertex labels needed to construct a graph using vertex creation, disjoint union, relabeling, and the addition of all edges between two distinct labels. An FPT running time has the form $g(\text{parameters})N^{O(1)}$ with a uniform polynomial exponent. In the connected-set row, the polynomial bound is uniform over the graph class; allowing the exponent to depend on the individual input would not give a polynomial-time class.

\section{Main Results}
\label{sec:main-theorem}
The first two theorems give exact algorithms for three colors. The next two give deterministic and randomized approximation schemes for any fixed palette size. All statements concern free flooding on complete rectangular boards.

\begin{theorem}[Three rows]
\label{thm:three-rows}
Three-color Free-Flood-It on complete $3\times n$ boards is in P. The minimum number of flooding moves and an optimal flooding sequence can be computed in deterministic polynomial time.
\end{theorem}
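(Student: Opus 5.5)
We reduce the flooding problem to painting via the Painting equivalence (Preliminary Theorem~\ref{thm:painting}): since $p(G,f)=F(G,f)+1$ for the connected grid $G=P_3\square P_n$, it suffices to compute $p(G,f)$ together with an optimal painting plan. We then convert the plan into a flooding sequence by a direct reversal, which we prove later as Lemma~\ref{lem:direct-reversal}.

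\textbf{Step 1: normalize plans into painting trees.} We represent a plan as a rooted tree with one node per stroke. The parent of a stroke $(S,a)$ is the latest earlier stroke that it overwrites, and the regions nest, so that each node's region contains its children's regions. Every vertex $v$ then has an owner $\pi(v)$. Using exchange arguments specific to three colors, we will show that some optimal plan has strong color restrictions along every ancestral path $\operatorname{anc}_T(x)$. Concretely, along such a path consecutive colors differ, and a node whose color equals that of a strict ancestor can be merged or redirected without increasing the cost. With only three colors, this forces ancestral color sequences into a very rigid form.

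\textbf{Step 2: local descriptions in a window.} Fix a window of $O(1)$ consecutive columns, hence $s=O(1)$ vertices. We consider the subtree spanned by the owners of the window's vertices, together with their ancestors and LCAs, contracted to branching nodes and marked nodes. This subtree has $O(s)$ branching points. The color restrictions from Step~1 should imply that, apart from the chain above the topmost LCA, each contracted edge carries only $O(1)$ distinct descriptions. The chain itself stores only a bounded amount of information per step. Hence there are polynomially many local descriptions, each recording node colors, the owner of each named vertex, and the connectivity partition at the column boundary. The connectivity partition is needed to guarantee that every unfinished region remains connected in the end.

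\textbf{Step 3: dynamic programming with a stack.} We sweep the columns from left to right. The state is a local description together with a stack holding the long chain of common ancestors shared across columns. This chain can be arbitrarily long, and no single state can bound its length. We therefore phrase the transitions as a weighted context-free grammar, equivalently a weighted pushdown automaton. Opening a shared ancestor pushes a symbol; closing it, once no later column uses it, pops that symbol and pays its single stroke cost. The cost of a stroke is paid exactly once, even though the stroke serves many columns. Minimum-weight derivation in a CFG with polynomially many nonterminals and rules is computable in polynomial time, by a Knuth/Bellman--Ford style fixed point over nonterminals of the form $[q,X,q']$. Backtracking recovers an optimal tree, and Lemma~\ref{lem:direct-reversal} turns it into an optimal flooding sequence.

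\textbf{Main obstacle.} The hardest part is Step~1 together with its use in Step~2. We must find a normal form for optimal paintings, under three colors, that is restrictive enough that the window-visible part of the tree has polynomially many types. The same normal form must remain sound, so that locally consistent descriptions glue into one valid global painting with connected strokes. To approach this, we would first try an exchange lemma: two nested same-colored strokes with only one intervening color can be merged. We would then check soundness by verifying that the boundary partitions compose across columns, using the invariant that no unfinished region loses a component behind the boundary.
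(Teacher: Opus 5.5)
Your architecture is the paper's: painting equivalence, a nested painting tree, polynomially many window templates with boundary connectivity, a stack for the unbounded chain of common ancestors, a weighted pushdown fixed point over triples $[q,X,q']$, and direct reversal. The gap is in Steps~1--2, which carry the whole proof, because the normal form you propose is false.

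\emph{Literal reading.} Suppose every node that repeats a strict ancestor's color could be eliminated. Then some optimal tree would have pairwise distinct colors on every root-to-leaf path, hence at most three nodes per path, and the stack would never hold more than two symbols. This contradicts your own Step~3 and Proposition~\ref{prop:unbounded-stack}: there every optimal normalized tree for $w_m$ is the path colored $a_0a_1\cdots a_m$. Already for $w_3=0120210$, a short case check shows that every four-stroke nested painting repeats a color along an ancestral path.

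\emph{Weaker reading.} The exchange you suggest, merging an $a,b,a$ pattern, also fails. Take the $3\times7$ board with rows $0101010$, $0100010$, $0101010$. Its optimum is two moves, and every three-stroke nested painting is a path colored $0,1,0$. For instance: the whole board in $0$, then columns $2$--$6$ in $1$, then columns $3,5$ together with the vertex $(2,4)$ in $0$. Removing the inner stroke would split the middle stroke into four pieces.

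So the claim in Step~2 that each contracted edge has $O(1)$ descriptions is unsupported, and the exchange route to it is closed.

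The missing idea is that no bound on chain length is needed. The paper normalizes to nested \emph{saturated} strokes, with $\partial S\cap C_a=\emptyset$ for each stroke $(S,a)$ (Lemma~\ref{lem:canonical}). This yields the edge constraint of Lemma~\ref{lem:edge}: owners of adjacent vertices are comparable, and the higher owner's color is absent from the tree path below it down to the lower owner. In a connected window, an edge crossing the cut through an internal skeleton chain therefore excludes one color from the entire chain (Lemma~\ref{lem:two-color-chain}). With three colors and distinct parent-child colors, the chain alternates between the other two colors. It is thus determined by the excluded color, its first color and its length. This gives $O(M)$ words per chain and $O_s((M+1)^{2s-2})$ templates: polynomial, not constant.

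Because chains can be long, two further ingredients are needed:
\begin{itemize}
\item You must bound the connectivity decorations on all chain nodes, not just record one boundary partition. The paper does this by monotone coarsening toward the root (Lemma~\ref{lem:monotone}), which for three rows reduces to a single cutoff on the domain $\{\mathrm{top},\mathrm{bottom}\}$. Stack nodes contain a whole connected column, so they need no partition.
\item Gluing requires matching the old tree by exact restriction, without contracting unmarked internal nodes, and charging only newly introduced nodes. Otherwise a run could alter an already-paid ancestral path, or pay twice for a node that moves between the stack and the finite tree.
\end{itemize}
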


\begin{theorem}[Fixed height]
\label{thm:main}
On a complete $k\times n$ board with at most three colors, the minimum number of flooding moves and an optimal sequence can be computed deterministically in $N^{O(k^2)}$ time, where $N=kn$. In particular, the problem is polynomial-time solvable for every fixed height $k$, with running time $n^{O(k^2)}$ and constants depending on $k$; parameterized by height, it is in XP.
\end{theorem}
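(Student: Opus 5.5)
By Preliminary Theorem~\ref{thm:painting} it suffices to compute $p(G,f)$ together with an optimal painting plan, subtract one, and convert the plan into flooding moves. The plan is to follow the outline in the abstract and to generalize the three-row argument rather than treat $k=3$ as a special case.

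\textbf{Step 1: normal form.} First I would show that among optimal plans there is one whose strokes form a rooted \emph{painting tree}. In this tree each stroke's region is contained in the region of its parent, a child is painted after its parent, and siblings are disjoint. This comes from the standard uncrossing: replace a stroke by its visible part restricted to the components that still matter, which never increases the cost. I would then prove restrictions on the colors along ancestral paths for three colors. For example, a child never repeats its parent's color, and along any root-to-node path the colors cannot repeat in a pattern that a local exchange argument could shorten. These are the ``strong restrictions'' promised in the abstract. Their purpose is a counting consequence: for a connected window $W$ of $s=O(k)$ vertices (two consecutive columns), the subtree spanned by the owners $\pi(v)$, $v\in W$, with its branching structure compressed, admits only $N^{O(s)}$ descriptions, apart from a common ancestor chain above the LCA of the marked nodes. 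The colors of that chain are also constrained. Specifically, I would show they alternate so that the chain is determined by its length and its top and bottom colors, up to what the future can still use.

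\textbf{Step 2: left-to-right recurrence with a stack.} The state at boundary $j$ consists of three parts. The first is the local tree description of the owners of $B_j$. The second is the partition $\rho$ of the marked nodes into connected pieces of each region's intersection with $V_{\le j}$, restricted to $B_j$. Maintaining the invariant that an unfinished region has all of its components touching $B_j$ rules out lost components. The third is the unbounded ancestor chain, which is kept on a stack. Transitions read column $B_{j+1}$, guess owners and new strokes, pop chain nodes whose regions close, and push new ones. The cost of a stroke is charged once, when the node is created. Since the number of local states is $N^{O(k^2)}$ (partitions and tree shapes over $O(k)$ marked nodes, with node identities ranging over $O(N)$ choices in $O(k^2)$ parameters), the process is a weighted pushdown system with polynomially many control states and a finite stack alphabet. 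The minimum-cost accepting run can be computed by a context-free (CYK/Bellman--Ford style) recurrence over triples (state at push, state at matching pop, interval of columns). This has $N^{O(k^2)}$ entries, each evaluated in polynomial time. That evaluation gives the bound, and traceback yields the tree.

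\textbf{Step 3: soundness and extraction.} I would show that any accepting run assembles into a global painting tree whose regions are connected, using the boundary partitions. A region is declared complete only when all of its pieces have merged into one block. Conversely, I would show that the normal-form optimal tree induces an accepting run of equal cost. Finally, I would use the direct reversal of the produced tree (Lemma~\ref{lem:direct-reversal}) to output a flooding sequence of length $p-1$.

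The main obstacle is Step 1's counting claim. I need to prove that three colors force enough structure on ancestral paths that everything above the window's LCA is a chain summarizable by a stack of finitely many symbols, while everything below is polynomially describable. I would attack it by an exchange argument: if an ancestral path had a forbidden color pattern, merge or delete a stroke and repaint, keeping the cost non-increasing and decreasing a potential such as the total depth.
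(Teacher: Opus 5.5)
\textbf{The core counting lemma is missing, and your version of it is aimed at the wrong chain.} Your overall architecture matches the paper's: a nested painting tree, frontier partitions with a no-lost-component invariant, a stack of common ancestors, a matched push/pop summary recurrence, and direct reversal. But the lemma you defer as the ``main obstacle'' is where the proof lives.

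The common ancestors above the window LCA do \emph{not} alternate, even in optimal trees. In Proposition~\ref{prop:unbounded-stack} the strict ancestors of the middle owner have colors $a_0a_1\cdots a_{m-1}$ with $a_i=i\bmod 3$, which uses all three colors, and that proposition shows this nested optimum is forced. If your claim held, the triple (length, top color, bottom color) would be a polynomial summary and no stack would be needed. The stack exists precisely because this word is arbitrary apart from parent--child inequality.

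What must be polynomially countable are the unmarked one-child chains \emph{inside} the window skeleton. The local description must keep their full color words, or transitions cannot re-identify already-paid strokes. Their alternation comes from saturation, not from an exchange argument on color patterns:
\begin{itemize}
\item Lemma~\ref{lem:canonical} \emph{enlarges} regions. It unites an earlier stroke with a later intersecting one, absorbs boundary vertices of the stroke's own target color, and replaces the first stroke meeting a component $K$ of $G[S_i\setminus A_i]$ by $K$; termination uses the increasing quantity $\sum_i|S_i|$.
\item This yields Lemma~\ref{lem:edge}: owners of adjacent vertices are comparable, and the higher owner's color is absent on the whole path down to the lower owner.
\item Cutting an internal chain splits the window marks into two nonempty sets. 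Some window edge crosses the cut with owners bracketing the chain, so that owner's color is excluded along the entire chain (Lemma~\ref{lem:two-color-chain}).
\end{itemize}
Your uncrossing, which shrinks strokes to their visible parts, goes the opposite way. A visible part can be disconnected, so restricting to it can cost extra strokes. Shrinking destroys saturation rather than creating it. And siblings that are merely disjoint may still be adjacent, so adjacent vertices can have incomparable owners and the window LCA need not be marked (Lemma~\ref{lem:lca}).

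\textbf{The source of the exponent $k^2$ is also missing.} Every node of an internal chain carries its own frontier partition, and chains have length up to $M=N$. Unrestricted partition sequences along a chain are therefore exponentially many, and ``node identities ranging over $O(N)$ choices in $O(k^2)$ parameters'' does not bound them. The paper uses Lemma~\ref{lem:monotone}: along such a chain the domain $D_x$ is constant and the partitions only coarsen towards the root. Each of the at most $2k-2$ chains thus has at most $k-1$ breakpoints, giving $C(k)(M+1)^{(2k-2)(k-1)}$ decorations per column template. Combined with $O_k((M+1)^{4k-2})$ joint templates, this gives $C(k)(M+1)^{2k^2}$ macros per cut.

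Two smaller points also need fixing:
\begin{itemize}
\item You must justify the finite stack alphabet. A stack node contains the whole connected column, so by the invariant its processed region is connected and needs no partition.
\item Pops are not only for closing regions. Whenever the joint template's root lies above the old LCA, the ancestors in between are popped and matched by color at zero cost. Old nodes that become common ancestors of the new column are pushed back uncharged.
\end{itemize}
With the edge-constraint normalization and the monotone-decoration lemma supplied, the rest of your outline matches the paper's construction.
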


The algorithm scans columns and enumerates local descriptions of a normalized painting tree, together with partitions recording connectivity through the processed graph. Three colors force unmarked local chains to alternate. Shared ancestors are retained on a stack, and transitions pay only for newly introduced strokes. A weighted context-free recurrence finds a least-cost accepting computation. For correctness, every optimal normalized painting yields such a computation, while compatible descriptions, stack matching, and connectivity checks reconstruct a valid painting with each stroke charged once. Reversing it gives a legal flooding sequence of length one less than its painting cost. Section~\ref{sec:algo-3xn} proves the three-row case and its polynomial running-time bound; Section~\ref{sec:generalization} extends the construction to fixed height.

\begin{theorem}[Deterministic approximation]
\label{thm:main-budget-approximation}
For every fixed height $k$ and palette size $c$, and every $0<\varepsilon\le1$, a legal flooding sequence for $B\in\mathcal B_{k,n,c}$ of length at most $(1+\varepsilon)m(B)$ can be computed deterministically in
\[
 O_{k,c}\!\left(N^6+2^{O_{k,c}(1/\varepsilon)}N\right)
\]
time, where $N=kn$. Thus the problem admits a deterministic EPTAS for every fixed $k,c$.
\end{theorem}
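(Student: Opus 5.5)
We plan to combine two ingredients hinted at in the abstract and Table~\ref{tab:known-results}: the constant additive approximation of Meeks and Scott~\cite[Corollary 4.10]{meeks2012graphs}, and a budget-parameterized exact algorithm running in $2^{O_{k,c}(b)}N$ time (Proposition~\ref{prop:budget}, proved in Section~\ref{sec:approximation}). Let $A=A(k,c)$ be the additive constant of Meeks and Scott, so their polynomial-time algorithm outputs a legal sequence of length at most $m(B)+A$. First we run that algorithm; we expect its running time to be $O_{k,c}(N^6)$, which accounts for the first term. Call its output length $L$. Then $m(B)\ge L-A$.

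Next we split into two cases by the size of the optimum. Set the threshold $b_0=\lceil A/\varepsilon\rceil$. If $L-A\ge b_0$, then $m(B)\ge A/\varepsilon$, so $L\le m(B)+A\le(1+\varepsilon)m(B)$, and we return the Meeks--Scott sequence. Otherwise $m(B)\le L< b_0+A=O_{k,c}(1/\varepsilon)$. In that case we run the budget algorithm with budget $b=L$ (or with $b=0,1,\dots$ increasing until it succeeds), which returns an exact optimal legal sequence. Its cost is $2^{O_{k,c}(L)}N=2^{O_{k,c}(1/\varepsilon)}N$ because $\varepsilon\le1$ and $A$ is a constant. Summing the two phases gives the stated bound. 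An EPTAS follows since the exponential depends only on $\varepsilon,k,c$ and multiplies a fixed polynomial.

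The main obstacle is the budget algorithm itself, namely establishing Proposition~\ref{prop:budget} with a single-exponential dependence on $b$ and linear dependence on $N$. Our plan there uses Preliminary Theorem~\ref{thm:painting}. A painting with at most $b+1$ strokes produces a coloring with few monochromatic components per column change. Equivalently, since every column boundary where the coloring varies must be touched by some stroke boundary, at most $O_k(b)$ columns differ from their neighbours; long runs of identical columns can be compressed to a bounded number of copies without changing the optimum. This reduces the board to $O_k(b)$ columns. We then find an optimal flooding by a search whose state count is $c^{O(kb)}$, which is $2^{O_{k,c}(b)}$, or by dynamic programming over column profiles. We must also check that the compression step is safe and that a sequence for the compressed board lifts to a legal sequence of equal length on the original. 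We expect to prove this by showing that identical adjacent columns can be kept together in every move.

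A secondary point is verifying the running time of Meeks--Scott's approximation as $O_{k,c}(N^6)$. If their bound is stated differently, we would replace that term by their polynomial accordingly, since only its independence from $\varepsilon$ matters.
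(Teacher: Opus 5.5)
Your outer reduction is sound and is essentially the paper's Corollary~\ref{cor:eptas}. You run a constructive additive algorithm, and whenever the optimum might lie below about $C/\varepsilon$ you solve exactly with the budget algorithm. The paper reverses the order: it runs the budget search with $T=\lfloor C/\varepsilon\rfloor$ first and falls back to the additive strategy only on failure. Your ordering via $L$ works equally well.

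The gap is in the part you yourself call the main obstacle, the $2^{O_{k,c}(b)}N$ budget algorithm. Your compression step is harmless: two identical adjacent columns lie in common monochromatic components forever, so merging them preserves the coloured component quotient. However, the claim that only $O_k(b)$ column boundaries can be non-identical is false. Take $k=2$, with the top row constantly $0$ and the bottom row $1,2,1,2,\dots$. Recolouring the top row to $1$ and then to $2$ floods the board in two moves, yet all $n-1$ adjacent column pairs differ; an arbitrary $\{1,2\}$-word in the bottom row behaves the same way, with unboundedly many runs. A single connected stroke, such as the comb ``top row plus the odd cells below'', touches every column boundary. So a bound on the number of strokes does not bound the number of distinct columns, the board does not shrink to $O_k(b)$ columns, and the exhaustive search over $c^{O(kb)}$ colourings has no foundation.

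What is needed is a genuine left-to-right dynamic program whose state encodes the move \emph{regions}, not current colours. The paper fixes a target word $d_1\cdots d_t$ and guesses for every vertex a history $x(v)\in\{0,1\}^t$ recording which moves change it, which determines all intermediate colours. Three local conditions then make each nonempty connected $S_i$ a maximal monochromatic component immediately before move $i$:
\begin{itemize}
\item if both endpoints of an edge are selected for move $i$, they have equal colours before it, and if exactly one is selected, their colours differ;
\item every selected vertex satisfies $a_{i-1}(v)\neq d_i$;
\item the final colouring is monochromatic.
\end{itemize}
Nonemptiness and connectivity of each $S_i$ are enforced along a path decomposition with bags of at most $k+1$ vertices. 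For each move the state stores a partition of the selected bag vertices by connectivity through processed vertices of $S_i$, plus a flag marking that $S_i$ is finished. Transitions reject a block that disappears while another remains. This gives at most $[2^{k+2}\operatorname{Bell}(k+1)]^t$ states and hence $2^{O_{k,c}(b)}N$ time.

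Your alternative of ``dynamic programming over column profiles'' points in this direction. But unless the profile carries per-move participation and per-move connectivity, nothing certifies that each move acts on a single maximal monochromatic component.

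A lesser point concerns the $N^6$ term, which you should not leave to whatever bound Meeks and Scott state. The paper does not assume a witness version of their linking algorithm. It builds an actual crossing greedily: among the at most $Nc$ legal moves it picks one that decreases the crossing value $L$, for at most $N-1$ iterations, which takes $O_c(N^6)$ time. It then completes the board with at most $(c-1)(k-1)$ further moves by Lemma~\ref{lem:crossing-completion}.
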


A path-decomposition algorithm records each vertex's participation history and verifies that each move changes a maximal monochromatic connected component. It solves the budget-$b$ problem in $2^{O_{k,c}(b)}N$ time. Independently, the linking algorithm of Meeks and Scott~\cite{meeks2012graphs} yields a crossing component, after which $C=(c-1)(k-1)$ further moves suffice. Search exactly up to $\lfloor C/\varepsilon\rfloor$; if no solution exists, the additive-$C$ strategy has relative error at most $\varepsilon$. Proposition~\ref{prop:budget} and Corollary~\ref{cor:eptas} give the complete proof.

\begin{theorem}[Randomized approximation]
\label{thm:main-randomized}
For every fixed height $k$ and palette size $c$, every $0<\varepsilon\le1$, and every $0<\delta<1$, a randomized algorithm returns a legal flooding sequence whose length is at most $(1+\varepsilon)m(B)$ with probability at least $1-\delta$. Its running time is
\[
 O_{k,c}\!\left(N^6+2^{O_{k,c}(1/\varepsilon)}N\log(2/\delta)\right).
\]
For every outcome of its random choices, its output length is at most $m(B)+(c-1)(k-1)$.
\end{theorem}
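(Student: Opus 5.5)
We would prove Theorem~\ref{thm:main-randomized} by reusing the skeleton of Theorem~\ref{thm:main-budget-approximation}, changing only the exact bounded-budget search. First we run the Meeks--Scott linking procedure~\cite{meeks2012graphs} in $O_{k,c}(N^6)$ time. It produces a legal sequence that creates a crossing component and then finishes with $C=(c-1)(k-1)$ further moves, so its length $L$ satisfies $L\le m(B)+C$. This sequence is kept as a deterministic fallback. Since the final output will always be the shorter of this fallback and any legal sequence found by the randomized part, the worst-case guarantee $m(B)+(c-1)(k-1)$ holds for every outcome of the random choices. Legality of the returned sequence is checked explicitly by simulation in $O(N\cdot\text{length})$ time, so we never return an illegal candidate.

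Next we handle small optima. Set $b^*=\lfloor C/\varepsilon\rfloor=O_{k,c}(1/\varepsilon)$. If $m(B)>b^*$, then $C<\varepsilon m(B)$, and the fallback already has length at most $(1+\varepsilon)m(B)$ with probability one. It therefore suffices to find, when $m(B)\le b^*$, an optimal sequence with probability at least $1-\delta$.

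For this we use a randomized one-sided search for budget $b\le b^*$. We run the path-decomposition dynamic program of Proposition~\ref{prop:budget} with a color-coding or random-labeling layer. Each vertex's participation history is guessed from a random hash, which lets the dynamic program drop part of its deterministic state. A single trial runs in $2^{O_{k,c}(b)}N$ time. It succeeds with probability at least some constant $p_0=p_0(k,c)>0$ whenever a solution of length at most $b$ exists. It never reports a false solution, because every candidate is verified.

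A simpler realization, if the hashing is awkward, is to randomize only the choice among tied or consistent guesses, making success probability at least $1/2$ per trial. In the degenerate case the single trial can even be made deterministic, with probability-one success. We repeat the trial $\lceil\log(2/\delta)/\log(1/(1-p_0))\rceil=O(\log(2/\delta))$ times independently and take the best verified sequence. The failure probability is then at most $\delta/2<\delta$, and the total cost is $2^{O_{k,c}(1/\varepsilon)}N\log(2/\delta)$, matching the stated bound.

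The main obstacle is the per-trial success bound: showing that the sampled version of the budget algorithm finds a witness with probability bounded below by a constant depending only on $k$ and $c$, not on $n$. We would try first to make the sampling act only on a bounded-size window per column, so that a fixed optimal sequence's local pattern is hit with probability $2^{-O_{k,c}(b)}$. That factor is then absorbed into the $2^{O_{k,c}(1/\varepsilon)}$ term by repeating $2^{O(b)}\log(2/\delta)$ times. Independence of trials then gives the $1-\delta$ guarantee.
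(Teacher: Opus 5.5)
\textbf{There is a genuine gap: the randomized search is exactly the part you leave unproved.} Your outer framework matches the paper: the additive fallback of length at most $m(B)+C$, the threshold $T=\lfloor C/\varepsilon\rfloor$, and the observation that $m(B)>T$ makes the fallback good with probability one.

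You never identify what is sampled in a way that gives a success probability independent of $n$, and the mechanisms you suggest do not provide one.
\begin{itemize}
\item Guessing participation histories by a random hash must agree with some actual solution at every vertex. The sets $S_i$ can have unbounded size, so no bounded collection of guesses decides success. Colour coding helps only when the witness is small.
\item Sampling a bounded window independently in each column multiplies the hitting probabilities over all $n$ columns. That gives $2^{-O(b)n}$, not $2^{-O(b)}$. A window shared by all columns would first require saying which global object of size $O(b)$ is being guessed.
\item The ``tied guesses'' variant with success probability $1/2$ has no justification.
\item Your last paragraph quietly replaces the constant $p_0(k,c)$ by $2^{-O(b)}$. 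That is the right order, but it contradicts the earlier claim.
\end{itemize}

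\textbf{The missing observation is that the target-colour word is the only global object of size $O(t)$.} In the proof of Proposition~\ref{prop:budget}, the histories need not be guessed at all. For a fixed word $d_1\cdots d_t$, the history dynamic program decides feasibility exactly in $2^{O_{k,c}(t)}N$ time. The algorithm is then:
\begin{itemize}
\item For $t=1,2,\ldots,\min\{T,L-1\}$ in increasing order, with $L$ your fallback length, sample $R_t=\lceil c^t\ln(1/\delta)\rceil$ words uniformly and independently.
\item Test each sampled word exactly, and return the first success; otherwise return the fallback.
\end{itemize}

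Write $q=m(B)$. If $q=L$ the fallback is already optimal, so assume $q\le\min\{T,L-1\}$. No test with $t<q$ can succeed, so the algorithm reaches $t=q$, and any success there is optimal. At least one of the $c^q$ words of length $q$ is feasible, namely that of an optimal sequence. The failure probability is therefore at most $(1-c^{-q})^{R_q}\le e^{-R_qc^{-q}}\le\delta$. No union bound over budgets is needed, because only the sampling at $t=q$ matters. The total time is $\sum_{t\le T}R_t\,2^{O_{k,c}(t)}N=2^{O_{k,c}(1/\varepsilon)}N\log(2/\delta)$.

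Your remark about a deterministic trial would in fact suffice for the statement as literally written. The deterministic scheme of Theorem~\ref{thm:main-budget-approximation} already meets this time bound with success probability one. But you present it only as a degenerate aside. The argument you actually propose rests on a per-trial success bound that is unproved and, in the forms you sketch, false.
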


For each short budget, sample target-color words and verify them using the same exact history dynamic program. If the optimum is short, a feasible optimal word is sampled with the stated probability; otherwise the additive strategy already meets the approximation guarantee. Failed sampling affects only the approximation ratio, never the legality of the returned moves. Section~\ref{ssec:randomized-approximation} proves the bound. This randomized variant gives an explicit success-probability guarantee, without improving on the deterministic scheme's worst-case running time.

\section{Algorithm and Proofs: 3-Free-Flood-It on \texorpdfstring{$3\times n$}{3 x n} Boards}
\label{sec:algo-3xn}
Throughout this section $G=P_3\square P_n$ and $N=3n$. We optimize painting cost with budget $M=N$, since painting every vertex separately gives $p(G,f)\le N$. All bounds below use this deterministic choice of budget.

The construction has two conceptual stages. We first identify a form of optimal painting whose restriction to a connected window has only polynomially many possibilities. We then show how to join these local descriptions while preserving both connectivity and the identity of strokes shared across several columns. A stack records the common ancestors that cannot be bounded locally. The final step is to optimize over the resulting pushdown system by a finite recurrence.

\subsection{From Painting Plans to Nested Trees}
\label{ssec:canonical}
We begin by arranging that strokes are nested and their boundaries respect the target colors. A stroke $(S,a)$ is \emph{saturated} if $\partial S\cap C_a=\emptyset$: no vertex just outside the stroke has target color $a$. This condition refers to the prescribed coloring $f$, not to the intermediate colors during painting. For a painting plan, let $A_i$ be the vertices whose last covering stroke is $i$. These are precisely the vertices for whose final colors that stroke is responsible.

\Needspace{12\baselineskip}
\begin{lemma}[Simultaneous normalization]
\label{lem:canonical}
There is an optimal painting plan whose first stroke covers $V(G)$ and whose regions are nested in their time order: if $i<j$ and $S_i\cap S_j\ne\emptyset$, then $S_j\subseteq S_i$. Every stroke can be chosen saturated. Moreover, for every connected component $K$ of $G[S_i\setminus A_i]$, the first later stroke that meets $K$ has region exactly $K$.
\end{lemma}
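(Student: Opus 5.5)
Our plan is to start from an optimal flooding sequence rather than from an arbitrary painting plan, since the reversal construction described after Preliminary Theorem~\ref{thm:painting} already yields nested regions. Take an optimal flooding sequence of length $F(G,f)$. Its reversal paints $V(G)$ first and then, in reverse move order, paints the component affected by each move in its pre-move color. By Preliminary Theorem~\ref{thm:painting} this plan has $p(G,f)$ strokes, so it is optimal.

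We first check nesting. At any time during flooding, the monochromatic components form a partition of $V(G)$. Each move merges the chosen component with the adjacent components of the new color. So the partitions coarsen monotonically, and each painted region is a block of the partition that held just before its move. Consider an earlier stroke $i$ and a later stroke $j$, both in painting order. They correspond to a later and an earlier move in flooding order. The region $S_j$ was a block at an earlier flooding time than $S_i$. Blocks at earlier times refine blocks at later times, so $S_j\subseteq S_i$ or the two regions are disjoint.

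Next we make strokes saturated and enforce the last clause, using cost-nonincreasing modifications that preserve nesting. The cleanest route is to argue directly in the flooding picture. Stroke $i$ paints block $S_i$ with color $a$. By nesting, any vertex $v\in\partial S_i$ with $f(v)=a$ lies in a block $X$ disjoint from $S_i$ at that time. Later strokes inside $X$ decide $v$'s final color, and nested strokes overwrite earlier ones.

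If $\partial S_i\cap C_a\neq\emptyset$, we enlarge $S_i$. We replace it by $S_i\cup\bigcup X$, where $X$ ranges over the blocks containing such vertices $v$. We then must check that the strokes inside $X$ that only repaint to reach $a$ become removable. Alternatively, we choose among all optimal nested plans one that maximizes $\sum_i|S_i|$ lexicographically in time order. Then we show that an unsaturated stroke admits an enlargement that is still optimal and nested. I expect this to be the main obstacle: enlarging a region can break nesting with strokes that straddle the boundary. It can also destroy the final color of vertices owned earlier. The fix is to enlarge only by the maximal connected set of $C_a$-vertices whose owners are strict descendants of an ancestor of $i$. This exploits the tree structure given by nesting, with parent equal to the smallest earlier containing region.

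For the last clause, let $K$ be a component of $G[S_i\setminus A_i]$. Every vertex of $K$ is overwritten by later strokes. By nesting, those strokes are contained in $S_i$ and avoid $A_i$. Since they are connected, each lies inside a single component $K$.

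Let $j$ be the first later stroke meeting $K$. Every later stroke meeting $K$ other than $j$ is either contained in $S_j$ or disjoint from it. Suppose $S_j\subsetneq K$. We replace $S_j$ by $K$. Vertices of $K\setminus S_j$ are then wrongly colored, but each is repainted later, because it is not in $A_i$. These repaintings lie in strokes disjoint from $S_j$, which are now nested inside the enlarged $K$. Vertices outside $K$ are unaffected, so the cost is unchanged.

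Repeating this in time order terminates, since region sizes only grow. The first-stroke-covers-$V(G)$ property is preserved throughout. Saturation is also preserved: the boundary of $K$ within $S_i$ is contained in $A_i$, which has color $f$-value equal to stroke $i$'s color. If that color were $a_j$, we could merge instead and lower the cost, contradicting optimality. Outside $S_i$, the saturation of $S_i$ gives the same conclusion.
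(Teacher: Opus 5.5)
Your nesting argument is correct: moves only coarsen the partition into monochromatic components, so the reversed regions are laminar and $S_1=V(G)$. Your rule replacing the first stroke meeting $K$ by $K$ is also the paper's rule, with the right justification. The gap is saturation, and with it the word ``simultaneous''.

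You never actually make the strokes saturated. You list two strategies and a fix but verify none of them, and the fix can fail. Suppose you add a vertex $v\in\partial S_i\cap C_{c_i}$ whose owner $\ell>i$ has $S_\ell$ disjoint from $S_i$ and containing vertices of other target colors. Then $S_\ell$ meets the enlarged $S_i$ without being contained in it, so nesting breaks.

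Your closing claim that the last-clause step preserves saturation is also wrong on $\partial K\setminus S_i$. There, saturation of $S_i$ only gives $f(u)\ne c_i$. Saturation of the new region $K$ of stroke $j$ requires $f(u)\ne a_j$. If $u$ is adjacent only to $K\setminus S_j$, the saturation of the old $S_j$ says nothing about it. So enforcing the last clause may require re-saturating $S_j$. That breaks nesting with $S_i$ and forces an earlier region to grow. A pass in time order does not account for this feedback.

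The missing idea is to make every repair an enlargement that keeps the number, colors, and order of strokes and the final coloring, and to run all repairs against one potential. The rules are:
\begin{itemize}
\item Add a single vertex $v\in\partial S_i$ with $f(v)=c_i$ to $S_i$. This cannot spoil a final color: either $v$ is repainted after step $i$, or stroke $i$ becomes its last stroke and gives it $f(v)$. Your worry about destroying earlier owners arises only when adding whole blocks $X$.
\item If $i<j$, $S_i\cap S_j\ne\emptyset$ and $S_j\not\subseteq S_i$, replace $S_i$ by $S_i\cup S_j$. The union is connected, and its added vertices are repainted at step $j$.
\item Once nesting holds, apply your $K$-rule.
\end{itemize}
Each rule strictly increases $\sum_i|S_i|\le pN$. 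Applying whichever rule is applicable until none is therefore terminates with all properties holding at once.

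This is exactly the paper's proof. It starts from an arbitrary optimal painting plan, so it needs neither the flooding reversal nor the harder direction of Preliminary Theorem~\ref{thm:painting}. Your ``maximize $\sum_i|S_i|$'' alternative becomes a correct proof once the first two enlargement rules are supplied.
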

\begin{proof}
Start from an optimal plan with $p$ strokes, written as $(S_1,c_1),\ldots,(S_p,c_p)$. All transformations below keep the stroke colors and their order and only enlarge regions. First enlarge $S_1$ to $V(G)$; each added vertex was originally first painted later. Whenever a vertex of final color $c_i$ lies in $\partial S_i$, add it to $S_i$. This preserves the final coloring, since either it is later overwritten or stroke $i$ now gives it its required final color. Whenever $i<j$ and intersecting regions violate nesting, replace $S_i$ by $S_i\cup S_j$. The union is connected and all added vertices are repainted at step $j$.

Restore saturation and nesting repeatedly. Once nesting holds, a later stroke meeting $S_i$ is contained in $S_i$ and avoids $A_i$. Being connected, it lies in one component of $G[S_i\setminus A_i]$. For such a component $K$, let $j>i$ be its first later touching stroke. Every vertex of $K$ is painted after $i$, and none is painted at a step strictly between $i$ and $j$. Therefore every vertex added by replacing $S_j$ by $K$ is painted again after $j$. This enlargement also preserves the final coloring and connectivity.

After any enlargement, restore the preceding properties and recompute the sets $A_i$ before applying the last rule again. Each actual enlargement strictly increases the integer $\sum_i|S_i|\le pN$. Hence this procedure terminates with all the stated properties simultaneously satisfied. Every intermediate plan has $p$ strokes and produces $f$. Moreover $A_i$ is nonempty for every stroke in any such optimal plan: otherwise deleting that stroke would leave the final coloring unchanged and contradict optimality. This also excludes equal nested regions in the final plan.
\end{proof}

The nested regions define a rooted tree $T$: the parent of a nonroot stroke is its smallest strictly containing stroke region. The root is the full-board stroke. Let $c(x)$ be the color of node $x$ and let $\pi(v)$ be the node of the last stroke covering $v$. Then
\begin{equation}
 S_x=\{v:\pi(v)\text{ is a descendant of }x\},
 \qquad f(v)=c(\pi(v)).
 \label{eq:region}
\end{equation}
Every $S_x$ is nonempty and connected. The last assertion of the normalization lemma makes the child regions of $x$ exactly the components of $G[S_x\setminus A_x]$. Adjacent tree nodes have different colors in an optimum: if a child has the same color as its parent, deleting the child's stroke changes no final color.

The value of this normalization is that a graph edge severely restricts the relative positions of its endpoints' owners. The owners cannot lie in different branches, and the color of the higher owner cannot reappear on the path to the lower one.

\begin{lemma}[Edge constraint]
\label{lem:edge}
For every graph edge $uv$, the nodes $\pi(u)$ and $\pi(v)$ are comparable in $T$. If $a=\pi(u)$ is a strict ancestor of $b=\pi(v)$, every node on the path from the child of $a$ towards $b$, through $b$ itself, has color different from $c(a)$.
\end{lemma}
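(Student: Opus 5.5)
Both assertions come straight from the region description \eqref{eq:region} together with the last clause of Lemma~\ref{lem:canonical}, which says that the child regions of a node $x$ are exactly the connected components of $G[S_x\setminus A_x]$. The plan is to prove comparability first and then the color condition. For the color condition I would push the edge down the tree and use the saturation property of Lemma~\ref{lem:canonical} at the right node.

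\emph{Comparability.} Suppose $\pi(u)$ and $\pi(v)$ are incomparable, and let $z$ be their LCA. Then $\pi(u)$ lies in the subtree of one child $y$ of $z$, and $\pi(v)$ lies in the subtree of a different child $y'$. By \eqref{eq:region} this gives $u\in S_y$ and $v\in S_{y'}$, and both vertices lie in $S_z\setminus A_z$. The sets $S_y$ and $S_{y'}$ are distinct components of $G[S_z\setminus A_z]$. But $uv$ is an edge of that induced graph, so $u$ and $v$ lie in the same component, which is a contradiction.

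\emph{Color condition.} Let $a=\pi(u)$ be a strict ancestor of $b=\pi(v)$, and take any node $w$ on the path from the child of $a$ down to $b$, so that $w$ is a strict descendant of $a$. I would show that $c(w)\ne c(a)$ as follows.
\begin{itemize}
\item $v\in S_w$, because $\pi(v)=b$ is a descendant of $w$.
\item $u\notin S_w$, because $\pi(u)=a$ is a strict ancestor of $w$, and by \eqref{eq:region} membership in $S_w$ requires the owner to be a descendant of $w$.
\item Hence $u\in\partial S_w$.
\item Saturation of stroke $w$ gives $\partial S_w\cap C_{c(w)}=\emptyset$, so $f(u)\ne c(w)$.
\item Since $f(u)=c(\pi(u))=c(a)$, this yields $c(w)\ne c(a)$.
\end{itemize}

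The one point needing care is the appeal to Lemma~\ref{lem:canonical}. That lemma must supply a single optimal plan that is simultaneously nested, saturated, and has the component property, and its proof does deliver all three properties at once. We also need \eqref{eq:region} to hold for that plan, which follows from nesting. With these in hand, both parts are essentially immediate, and no other step is a real obstacle.
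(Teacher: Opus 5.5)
Your proof is correct and follows the paper's argument: comparability comes from the child regions of the LCA being distinct components of $G[S_z\setminus A_z]$, and the color condition comes from saturation of each intermediate region $S_w$ at the boundary vertex $u$. You fill in a few details the paper leaves implicit, namely $u,v\notin A_z$, $u\notin S_w$ via~\eqref{eq:region}, and the need for all normalization properties to hold in one plan.
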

\begin{proof}
If the responsible nodes were incomparable, their regions would lie in different child components of their lowest common ancestor. No graph edge joins those components. For a node $z$ strictly below $a$ and above or equal to $b$, we have $v\in S_z$ and $u\notin S_z$. If $c(z)=c(a)=f(u)$, the edge $uv$ contradicts saturation of $S_z$.
\end{proof}

Conversely, a rooted colored tree and a map $\pi$ satisfying \eqref{eq:region}, with all $S_x$ nonempty and connected, give a valid painting by painting parents before children. This sufficient condition does not require every node to own a vertex directly. The algorithm may therefore admit nodes with an empty final responsible set: they are possibly redundant strokes, not invalid lower-cost solutions.

\begin{definition}[Admissible painting tree]
\label{def:admissible}
An \emph{admissible painting tree} for $(G,f)$ is a rooted tree $T$ with a color $c(x)\in\{0,1,2\}$ at each node and an owner map $\pi:V(G)\to V(T)$. We require $c(\pi(v))=f(v)$ for every vertex, different colors on every parent-child pair, and a nonempty connected region $S_x$ as defined in~\eqref{eq:region} for every node. The owners of every graph edge must also satisfy the comparability and path-color conditions of Lemma~\ref{lem:edge}. The cost is $|V(T)|$. The directly owned set $A_x=\pi^{-1}(x)$ is allowed to be empty.
\end{definition}

Normalization supplies an admissible optimal tree. Conversely, every admissible tree is a painting plan of its stated cost. Hence minimizing over admissible trees has exactly the original painting optimum, even though the admissible class is larger than the class of normalized optima. This broader definition is useful because the algorithm need not test whether every node owns a vertex somewhere outside its current window. It only needs to preserve nonempty connected subtree regions and valid edge constraints. Lemma~\ref{lem:direct-reversal} will show that this larger class also has directly recoverable flooding sequences.

\subsection{Local Tree Structure Under the Three-Color Restriction}
\label{ssec:chains}
We next ask how much of an admissible painting tree can be seen from a small connected part of the graph. Let $B$ be a set of $s$ named vertices such that $G[B]$ is connected. Write $r_B=\operatorname{LCA}(\pi(B))$ and let $K_B$ be the union of the tree paths from $r_B$ to the nodes of $\pi(B)$. A node is marked if it belongs to $\pi(B)$.

\begin{lemma}
\label{lem:lca}
The root $r_B$ is marked.
\end{lemma}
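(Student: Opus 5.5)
The claim follows from the comparability part of Lemma~\ref{lem:edge} together with connectivity of $G[B]$. I will show that the set of owners $\pi(B)$ has a minimum element under the ancestor order, and that this minimum element is $r_B$ itself.

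First, suppose for contradiction that $r_B$ is unmarked, so $r_B\notin\pi(B)$. Every node of $\pi(B)$ is then a strict descendant of $r_B$ and lies below exactly one child of $r_B$. Since $r_B$ is the lowest common ancestor of $\pi(B)$, at least two distinct children $y_1\ne y_2$ of $r_B$ have descendants in $\pi(B)$; otherwise that single child would be a lower common ancestor. Sort the vertices of $B$ by which child subtree contains their owner. This splits $B$ into at least two nonempty classes.

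Second, $G[B]$ is connected, so some graph edge $uv$ with $u,v\in B$ joins two different classes. Then $\pi(u)$ lies below $y_1$ and $\pi(v)$ lies below $y_2$, up to renaming. These nodes are incomparable in $T$, because neither is an ancestor of the other when they sit in disjoint child subtrees. This contradicts the comparability condition of Lemma~\ref{lem:edge}, which Definition~\ref{def:admissible} requires of every admissible tree. Hence $r_B\in\pi(B)$.

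The same argument also shows that the owners of a connected set form a chain in the ancestor order on their LCA-branching structure. It does not need the three-color hypothesis or the path-color condition. There is no real obstacle here. The only care needed is to use the comparability clause as part of admissibility, rather than the normalization of Lemma~\ref{lem:canonical}, so that the statement holds for every admissible tree and not only for normalized optima.
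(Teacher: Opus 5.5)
Your proof is correct and is the paper's own argument: an unmarked LCA forces marks in two distinct child subtrees of $r_B$, and connectivity of $G[B]$ then yields an edge whose owners are incomparable, contradicting the comparability clause of Lemma~\ref{lem:edge}. One caution: your closing remark that the owners of a connected set form a chain is false in general (for the one-column board colored $1,0,2$ from top to bottom, the optimal tree has the middle vertex owned by the root and the top and bottom vertices owned by two different children), but your proof never uses it, since it needs only that $r_B\in\pi(B)$.
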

\begin{proof}
Otherwise at least two child subtrees of $r_B$ contain marks. Connectivity of $G[B]$ supplies a graph edge joining vertices whose responsible nodes belong to different child subtrees, contrary to Lemma~\ref{lem:edge}.
\end{proof}

Call a node of $K_B$ key if it is marked or has at least two children in $K_B$. All leaves are marked. There are at most $s$ marked nodes and $s-1$ branching nodes, hence at most $2s-1$ key nodes. Suppressing the unmarked one-child nodes leaves a skeleton with at most $2s-2$ edges. The internal nodes represented by a skeleton edge form a possibly empty chain. Being unmarked here means unmarked \emph{in the current window}; such nodes may own vertices elsewhere in the grid.

\begin{lemma}[Alternating-chain lemma]
\label{lem:two-color-chain}
All internal nodes of a nonempty skeleton chain avoid a common color. With at most three colors and unequal parent-child colors, their color word alternates between at most two colors.
\end{lemma}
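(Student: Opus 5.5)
The approach is to use the edge constraint (Lemma~\ref{lem:edge}) on a single graph edge that crosses the chain. Consider a skeleton edge from a key node $x$ down to a key node $y$. Let $z_1,\dots,z_m$ with $m\ge1$ be the unmarked one-child internal nodes between them. Every leaf of $K_B$ is marked, so $y$ has a marked descendant, say $\pi(w)$ with $w\in B$. Write $D=\{v\in B:\pi(v)\text{ is a descendant of }z_1\}$. Then $D$ is nonempty since it contains $w$.

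The next step is to show that $D\ne B$. By Lemma~\ref{lem:lca}, $r_B$ is marked. Since $x$ is an ancestor of $z_1$ and $r_B$ lies weakly above $x$, the node $r_B$ is not a descendant of $z_1$. Hence the vertex of $B$ owned by $r_B$ lies outside $D$.

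Because $G[B]$ is connected, some edge $uv$ has $v\in D$ and $u\in B\setminus D$. By Lemma~\ref{lem:edge}, $\pi(u)$ and $\pi(v)$ are comparable. The node $\pi(v)$ is a descendant of $z_1$ and $\pi(u)$ is not, so $\pi(u)$ is a strict ancestor of $\pi(v)$ lying strictly above $z_1$. It is also marked and in $K_B$. The path from the child of $\pi(u)$ down to $\pi(v)$ therefore passes through $z_1,\dots,z_m$. This holds because every node of $K_B$ below $z_1$ lies under the single chain $z_1,\dots,z_m$ followed by $y$, as each $z_i$ has only one child in $K_B$, and $\pi(v)$ is a descendant of $y$. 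Lemma~\ref{lem:edge} then says that every $z_i$ has color different from $c(\pi(u))$. So $a=c(\pi(u))$ is the common avoided color.

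The second assertion follows directly. All $z_i$ take colors in the two-element set $\{0,1,2\}\setminus\{a\}$, and consecutive $z_i,z_{i+1}$ are parent and child with unequal colors. The word $c(z_1)\cdots c(z_m)$ therefore alternates between the two remaining colors, and is determined by its first letter and its length.

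The main point needing care is the claim that the path from $\pi(u)$ to $\pi(v)$ contains the whole chain. It requires $\pi(v)$ to lie below $z_m$, not merely below $z_1$. This holds because the $z_i$ are unmarked with exactly one child in $K_B$, so every marked descendant of $z_1$ is a descendant of $y$.
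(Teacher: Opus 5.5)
Your proof is correct and follows essentially the same route as the paper. You split the window marks by whether their owners descend from the top chain node $z_1$, use the marked LCA (Lemma~\ref{lem:lca}) and connectivity of $G[B]$ to obtain a crossing edge, and apply Lemma~\ref{lem:edge} with the higher owner's color $c(\pi(u))$ as the common excluded color; your explicit check that $\pi(v)$ lies below $y$, so that the path covers the whole chain, spells out a step the paper states only briefly.
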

\begin{proof}
Cut an edge on the route through this chain. The cut partitions the window marks into two nonempty sets: the lower set contains a descendant mark, and the marked root lies in the upper set. This partition is unchanged as the cut moves along the chain, because no internal node is marked or branches within $K_B$. Choose an edge $uv$ of the connected graph $G[B]$ crossing this partition. Its responsible nodes are comparable. The higher responsible node lies at or above the upper key endpoint and the lower one lies at or below the lower endpoint. Their connecting tree path contains every internal node of the chain. Lemma~\ref{lem:edge} excludes the higher responsible node's color from that entire chain. The exclusion uses this crossing edge; it need not be the color of the key node immediately above the chain.
\end{proof}

Thus each chain of length at most $M$ has $O(M+1)$ possible words, including the empty word: choose an excluded color, a first color, and a length. For fixed $s$, skeleton shapes, key-node colors, and assignments of the $s$ named marks have only constantly many possibilities. Therefore
\begin{equation}
 \#\{K_B\}=O_s((M+1)^{2s-2}).
 \label{eq:catalog}
\end{equation}
In particular, a column has $O(M^4)$ templates and a two-column window has $O(M^{10})$ templates. Each stored template is expanded into its parent array, color array, and mark-owner array; its individual size is $O(M)$, still polynomial.

\subsection{Enumerating Local Tree Structures}
\label{ssec:frontier}
\begin{definition}[Local template]
For a connected induced window $G[B]$, a local template consists of a rooted colored tree $K$ and owners $\pi_B:B\to V(K)$. Every node of $K$ must be an ancestor of an owner, the root must be the LCA of all owners, parent-child colors must differ, and owner colors must agree with $f$. The edges of $G[B]$ must satisfy the same comparability and path-color constraints as in Lemma~\ref{lem:edge}. The template contains every internal node on its root-to-owner paths, including nodes with no mark.
\end{definition}
This definition deliberately does not require a subtree's intersection with $B$ to be connected. A connected global region can meet a column in two pieces that are joined in the processed prefix or in the future. Connectivity is handled by decorations and transitions, not by deleting such templates from the local catalog. Likewise, a local template need not have a globally realizable completion in order to be enumerated.

The catalog bound must hold for the generation procedure, not merely for the number of retained templates. We give a direct construction using the descendant sets of the window marks. For a nonempty set $L$ of window marks, define
\[
 A(L)=\{f(u):u\in B\setminus L,\ uv\in E(G)\text{ for some }v\in L\}.
\]
The admissible colors for a node whose descendant marks are exactly $L$ are $\{0,1,2\}\setminus A(L)$. Indeed, the responsible node of an outside neighbor cannot be below this node and, by comparability, must be above it. The edge constraint then excludes the outside neighbor's color. Conversely, if adjacent marks have comparable owners and this palette condition holds at every node, all the window edge constraints hold.

Recursively choose a node color, the subset of marks of that color owned by the node, and the descendant-mark sets of its children. Different children may not have an edge between their marks: equivalently, group the connected components of the remaining induced window graph into child sets. A skeleton node owning no mark must have at least two children. This bounds the skeleton by $2s-1$ nodes and makes every recursive child set smaller. Insert only alternating chains between skeleton nodes, respecting the descendant-set palette, the endpoint colors, and the total budget $M$. Every nonempty proper descendant set has a nonempty boundary in the connected window, so its palette contains at most two colors. All subset and partition enumerations are over at most six marks, never all $N$ vertices. Multiple descriptions of the same template are removed by sorting their canonical arrays.

\begin{lemma}[Exactness of direct enumeration]
\label{lem:enumeration}
The descendant-set construction, followed by chain expansion, generates every local template of at most $M$ nodes and generates only local templates. For fixed window size, it takes polynomial time in $M$.
\end{lemma}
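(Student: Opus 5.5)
We prove three things in order: soundness (every generated object is a local template), completeness (every local template with at most $M$ nodes is generated), and the polynomial time bound. The organizing tool is the skeleton of a template, obtained by suppressing unmarked one-child nodes as in Section~\ref{ssec:chains}. Each node $x$ of a template $K$ is associated with its descendant-mark set $L_x=\{u\in B:\pi_B(u)\text{ is a descendant of }x\}$. The key equivalence, already sketched before the statement, is the following. Suppose adjacent marks have comparable owners. Then the edge constraints of Lemma~\ref{lem:edge} restricted to $G[B]$ hold if and only if $c(x)\notin A(L_x)$ for every node $x$. We will prove it once and use it in both directions. For an edge $uv$ with $\pi(u)$ strictly above $\pi(v)$, the nodes $z$ on the path from the child of $\pi(u)$ down to $\pi(v)$ are exactly those with $v\in L_z$ and $u\notin L_z$. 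So forbidding $f(u)$ at all such $z$ is the same as requiring $c(z)\notin A(L_z)$ for the contributions of that edge.

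\textbf{Soundness.} We check each output against the template definition.
\begin{itemize}
\item Every node is an ancestor of an owner, since skeleton leaves own marks and chain nodes lie on paths to them.
\item The root is the LCA of the owners, because it either owns a mark or has at least two children.
\item Parent-child colors differ by the endpoint-color rule for chains and skeleton edges.
\item Owner colors agree with $f$ by construction.
\item Marks placed in different children have no edge between them. Hence every edge either joins two vertices in one child set or has an endpoint owned at the current node. By induction on the recursion, adjacent marks have comparable owners.
\item The palette condition is imposed at every skeleton and chain node. By the equivalence, the path-color constraints therefore hold.
\end{itemize}

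\textbf{Completeness.} Given a template $K$, read off its skeleton. At each skeleton node $x$, record its color, the marks it owns, and the partition of the remaining marks among its skeleton children. Lemma~\ref{lem:edge} applied inside $K$ forbids edges between different children, since their owners would be incomparable. So this partition groups whole connected components of $G[L_x\setminus\pi_B^{-1}(x)]$ into child sets. That is exactly one of the enumerated choices. Unmarked skeleton nodes branch by definition, and Lemma~\ref{lem:lca} makes the root marked or branching, matching the generation rules. Each suppressed chain is a nonempty-or-empty path of unmarked nodes sharing the descendant set of its lower key node. By the equivalence, all its colors avoid $A$ of that set. Together with unequal consecutive colors this forces the alternating word, as in Lemma~\ref{lem:two-color-chain}. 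If $A(L)=\emptyset$, then $L=B$ and the chain lies above the root, which is impossible. So the chain word is among those enumerated. The total node count is at most $M$ by hypothesis, so the budget filter keeps it.

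\textbf{Running time.} The recursion tree has at most $2s-1$ skeleton nodes, and every choice ranges over subsets and partitions of at most $s$ marks. So the number of skeleton descriptions is $O_s(1)$. Each of the at most $2s-2$ skeleton edges receives one of $O(M+1)$ words, giving $O_s((M+1)^{2s-2})$ candidates, in agreement with \eqref{eq:catalog}. Each candidate is built in $O(M)$ time and checked in $O(M)$ time. Deduplication by sorting canonical arrays costs a further logarithmic factor.

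The step that needs the most care is the equivalence between the edge constraints and the per-node palette condition. In particular, the chain palette depends on the lower key node's set $L$ and not on the skeleton node above the chain. I would prove this first, as a standalone claim, and then derive both soundness and completeness from it.
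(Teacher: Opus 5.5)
Your proposal is correct and follows the same structure as the paper's proof. Completeness reads off the skeleton (owned marks, a grouping of the components of $G[L\setminus Z]$ into child sets, and palette-forced alternating chains), soundness uses the component-grouping rule for comparability and the descendant-set palette for the path-color constraints, and the $O_s((M+1)^{2s-2})$ count of descriptions gives the time bound. Stating the palette/edge-constraint equivalence as a standalone claim only repackages the remark the paper makes just before the lemma. Your root argument (the root owns a mark or branches, hence it is the LCA) is an acceptable substitute for the paper's observation that connectivity of the window forces the root to be marked.
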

\begin{proof}
First take a local template and suppress its unmarked one-child nodes. At a remaining node $x$, let $L$ be its descendant marks and let $Z\subseteq L$ be the marks owned directly by $x$. Every mark of $Z$ has color $c(x)$, and that color belongs to the palette determined by $L$. After removing $Z$, two marks joined by a window edge cannot lie in different child subtrees: their owners would be incomparable. It follows that each component of $G[L\setminus Z]$ lies wholly in one child set. The construction enumerates precisely the needed grouping of these components, including the possibility that one child receives several components. Each recursive child set is smaller than $L$. An unmarked key node has at least two children; a marked node removes at least one mark before recursing. Thus this skeleton is generated.

The marks below a suppressed chain are unchanged along that chain. Every inserted node must use its descendant-set palette, and the chain lemma says its word alternates between at most two colors. Enumerating length, first color, and endpoint compatibility therefore recovers the original chain, subject only to the total-node bound. This proves that no desired template is omitted.

For the reverse direction, an edge whose marks are assigned to distinct child sets is forbidden by the component-grouping rule. The two owners of every edge are consequently equal or comparable: recursively they remain in the same child set until at least one is owned at the current node. If $u$ is owned at the higher node and $v$ at a lower node, then at each node strictly between them, including the lower owner, $v$ belongs to the descendant set and $u$ is an outside neighbor. The palette condition therefore excludes $f(u)$ on that entire path. Owner colors, nonempty descendant-mark sets, and distinct parent-child colors are explicitly checked. The root is marked, since a connected window cannot be divided into two nonempty edge-separated child sets without removing a mark at the root. Hence it is the LCA, as required.

Finally, subset and component-partition choices concern only $s$ marks. For fixed $s$ there are constantly many skeleton descriptions, with at most $2s-2$ chain positions. Each position has $O(M+1)$ possible words. Expanding and checking a description uses polynomial time in its $O(M)$ nodes. Deduplication can be done by sorting the canonical arrays; it does not require enumerating arbitrary rooted trees of size $M$.
\end{proof}

To compare templates, order the children of each node by the smallest named mark in their descendant sets, and number nodes in preorder. Restriction repeats this canonical ordering for its retained marks. Each child has a nonempty, disjoint descendant-mark set, so there are no ties. Equality of the resulting arrays identifies the old nodes uniquely; the sibling order is only a representation convention, not a restriction on admissible painting trees.

\subsection{Recording Connectivity and Shared Strokes}
\label{ssec:connectivity}
The local tree describes which strokes meet a column, but does not say how their regions are connected on its left. To obtain a valid global painting, we must record enough of this connectivity to detect a region that can no longer be completed.

After column $j$, the active nodes are the ancestors of the owners of $B_j$. They consist of the finite tree $K_{B_j}$ and the strict common ancestors above its root. The latter are stored on a stack, in root-to-leaf order from bottom to top. For an active node $x$, put
\[
 D_x=\{v\in B_j:x\text{ is an ancestor of }\pi(v)\}.
\]
Store the partition $\rho_x$ of $D_x$ induced by connectivity in $G[S_x\cap V_{\le j}]$. The essential invariant is that \emph{every connected component of this processed region intersects the current column}. Saying merely that every block of $\rho_x$ intersects the column would be tautological and would fail to rule out a forgotten component. The invariant is necessary: the column separates past from future, so a component lost behind the frontier cannot later be connected to the active region.

Every stack node contains the whole column. Since the column is connected, the invariant makes its processed region connected. Consequently a stack symbol needs only its node's color, not a connectivity partition. Along an internal skeleton chain, the sets $D_x$ are \emph{equal}. Towards an ancestor the regions only grow, so the partitions can only coarsen. There are at most $|D_x|-1$ strict coarsenings, regardless of the length of the chain.

We call a choice of these connectivity partitions a \emph{decoration} of the local tree. For three rows, its description is particularly simple. Every nonempty subset of a column induces a connected graph except $\{\mathrm{top},\mathrm{bottom}\}$. Only this domain can have two blocks. All nodes with this domain lie on one ancestral chain, and towards the root their partitions change at most once, from separated to connected. A single cutoff position suffices. There are $O(M)$ such decorations per template, so a column has $O(M^5)$ decorated states.

\begin{lemma}[Interval support and no lost component]
\label{lem:support}
For a connected region $S$, the set of columns meeting $S$ is an interval. If $S$ meets $B_j$, then every component of $G[S\cap V_{\le j}]$ meets $B_j$. If a prefix component has no vertex in the next frontier while the region remains active, no extension can make that region connected.
\end{lemma}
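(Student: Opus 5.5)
The three assertions are elementary consequences of the column structure of $G=P_3\square P_n$. Every edge lies within one column or joins consecutive columns, and each column $B_j$ induces a connected path. The plan is to prove them in order, each from the previous one.

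\emph{Interval support.} Suppose $S$ meets columns $j_1<j_2$ and let $j_1<t<j_2$. Choose $u\in S\cap B_{j_1}$ and $w\in S\cap B_{j_2}$. Since $G[S]$ is connected, there is a path from $u$ to $w$ inside $S$. Along this path the column index changes by at most one per edge. By a discrete intermediate-value argument, some vertex on the path lies in $B_t$, so $S$ meets every column between $j_1$ and $j_2$.

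\emph{No lost component.} Let $S$ meet $B_j$, and let $Q$ be a component of $G[S\cap V_{\le j}]$. If $S\cap V_{\le j}=S$, then $Q=S$ by connectivity and the claim is immediate. Otherwise fix $q\in Q$ and $w\in S\setminus V_{\le j}$, and take a path from $q$ to $w$ inside $G[S]$. Consider the last vertex of its maximal initial segment in $V_{\le j}$. The next edge leaves $V_{\le j}$, and the only edges leaving $V_{\le j}$ join $B_j$ to $B_{j+1}$, so this last vertex lies in $B_j$. The initial segment stays in $S\cap V_{\le j}$, so it lies in $Q$, and hence $Q$ meets $B_j$.

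\emph{Third assertion.} Read ``the next frontier'' as $B_{j+1}$ while processing column $j+1$. Let $Q$ be a component of the processed region $G[S\cap V_{\le j+1}]$ that avoids $B_{j+1}$, while $S$ still meets $B_{j+1}$. Then $S$ also has a vertex in $Q$ and a vertex in $B_{j+1}$. Suppose some completion $S'\supseteq S\cap V_{\le j+1}$, agreeing with the prefix, were connected. Applying the previous paragraph to $S'$ at column $j+1$, the component of $G[S'\cap V_{\le j+1}]$ containing $Q$ must meet $B_{j+1}$. But $S'\cap V_{\le j+1}=S\cap V_{\le j+1}$, so that component is $Q$ itself, which avoids $B_{j+1}$. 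This is a contradiction.

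The main obstacle is the phrasing of the third assertion, not any calculation. One must make precise that extensions change only vertices beyond the frontier, so the prefix intersection, and hence the component $Q$, is fixed. Once this is stated explicitly, the separation argument above goes through directly.
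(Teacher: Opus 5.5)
Your proof is correct and follows essentially the paper's route: interval support because every edge changes the column index by at most one, the second claim by following an $S$-path until it reaches column $j$, and the third claim from the fact that the frontier column separates past from future. The differences are cosmetic. You route the path to a vertex beyond $V_{\le j}$, which needs the separate case $S\subseteq V_{\le j}$, whereas the paper routes it directly to a vertex of $S\cap B_j$. You also derive the third assertion as the contrapositive of the second applied to a completion $S'$, whereas the paper observes directly that a component lying strictly behind the frontier has no edge to any future vertex.
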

\begin{proof}
Every grid edge stays within one column or changes the column index by one. A path in $G[S]$ between vertices in columns $a<b$ must therefore meet every intervening column, proving interval support. Now let $u$ belong to a component of $G[S\cap V_{\le j}]$, and choose $v\in S\cap B_j$. Follow an $S$-path from $u$ to $v$ until its first visit to column $j$. That path segment stays within $V_{\le j}$, so the component of $u$ meets $B_j$. Finally, the current column separates all earlier columns from later ones. A component already strictly behind this separator and having no frontier vertex cannot acquire an edge to a future vertex. If another piece of the same region remains, the two pieces cannot be joined by any future extension.
\end{proof}

These facts explain why a partition of frontier vertices is sufficient only together with the no-lost-component invariant. The partition says which currently visible vertices are connected through the past. It does not record the existence of a component with no visible vertex. The transition must reject such a component at the moment it disappears, unless the whole region is being retired and is already connected.

\begin{lemma}[Monotone decorations]
\label{lem:monotone}
Let $x$ be an ancestor of $y$ and suppose $D_x=D_y=D$ at column $j$. Then $\rho_y$ refines $\rho_x$. On a chain with fixed nonempty domain $D$, there are at most $|D|-1$ strict partition changes towards the root.
\end{lemma}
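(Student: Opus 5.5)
Since $x$ is an ancestor of $y$, the region identity \eqref{eq:region} gives $S_y\subseteq S_x$: every vertex whose owner is a descendant of $y$ also has an owner that is a descendant of $x$. Intersecting with the processed prefix yields $S_y\cap V_{\le j}\subseteq S_x\cap V_{\le j}$. Hence $G[S_y\cap V_{\le j}]$ is a subgraph of $G[S_x\cap V_{\le j}]$, and any path witnessing that two vertices of $D$ lie in the same component of the smaller graph also witnesses this in the larger graph. This is the key observation. The partitions $\rho_y$ and $\rho_x$ have the same ground set $D$, because we assume $D_x=D_y$. Two vertices in a common block of $\rho_y$ therefore lie in a common block of $\rho_x$, which says exactly that $\rho_y$ refines $\rho_x$. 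No use of the invariant or of the color structure is needed for this part.

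For the counting statement, I would list the nodes of the chain from the deepest node to the highest as $z_0,z_1,\ldots,z_t$. Each $z_{i+1}$ is an ancestor of $z_i$, and every node has domain $D$. By the first part, $\rho_{z_0},\rho_{z_1},\ldots,\rho_{z_t}$ is a sequence of partitions of $D$, each refining the next. I measure progress by the number of blocks $|\rho_{z_i}|$.
\begin{itemize}
\item It is nonincreasing along the sequence.
\item It lies between $1$ and $|D|$, using that $D$ is nonempty.
\item A strict change, meaning $\rho_{z_i}\ne\rho_{z_{i+1}}$, must merge at least two blocks, because a refinement of a partition with the same number of blocks is equal to it. So each strict change lowers the count by at least one.
\end{itemize}
Consequently the number of strict changes is at most $|D|-1$, independently of $t$.

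I do not expect a real obstacle here. The one point requiring care is that refinement is compared on the identical ground set $D$, which is exactly why the hypothesis $D_x=D_y$ is imposed. Without it, restriction of partitions would be needed, as described in Section~\ref{sec:prelim}. I would also remark that the lemma holds for arbitrary processed prefixes, not only under the no-lost-component invariant, since it uses only the containment $S_y\subseteq S_x$ of the regions.
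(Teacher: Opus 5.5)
Your proof is correct and follows the paper's own argument: the inclusion $S_y\subseteq S_x$ carries connecting paths up to the ancestor, which gives refinement on the common domain $D$, and each strict coarsening lowers the block count, which lies between $1$ and $|D|$. Two of your additions are accurate and slightly more explicit than the paper: the justification that a refinement with the same number of blocks is an equality, and the remark that the no-lost-component invariant is not needed.
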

\begin{proof}
The subtree inclusion $S_y\subseteq S_x$ implies $S_y\cap V_{\le j}\subseteq S_x\cap V_{\le j}$. Every path witnessing that two elements of $D$ are connected for $y$ also witnesses their connection for $x$. Connections can be added but cannot be lost towards the ancestor. Each strict coarsening decreases the number of blocks by at least one, starting from at most $|D|$ blocks and ending with at least one. This proves both claims.
\end{proof}

A finite state at a column consists of the canonical arrays of $K_{B_j}$ and the permitted decoration $\rho_x$ at every one of its nodes. The column index is part of the control state, so two equal patterns at different positions are not confused. The stack above $\bot$ records the colors of strict common ancestors, with the nearest ancestor on top. Global identities and full regions are not finite-state coordinates.

\subsection{Building the Pushdown System Column by Column}
\label{ssec:transition}
Only accepting runs of total cost at most $M$ are relevant. We use a bottom marker $\bot$ that cannot be popped during a column transition.

At the first column there is no processed region to the left, so each stored partition is determined by the edges within $B_1$. The only information that must be chosen freely is the painting tree itself. Starting with stack $\bot$, allow any sequence of common ancestors to be pushed, at cost one per node, with different colors on consecutive nodes. Then choose the first-column template $K_{B_1}$, pay its number of nodes, and require its root color to differ from that of the nearest common ancestor, if one is present. The partitions of the template are the connected components induced by its domains in $B_1$. Although there are arbitrarily many possible ancestor words, only words of length at most $M$ can occur in a run within the painting budget. We will represent these choices by pushdown rules rather than enumerate the words.

Suppose now that column $j$ has been processed. The relation between the old and new frontiers is described by a tree for the connected window $W_j=B_j\cup B_{j+1}$. Let $J=K_{W_j}$ be such a joint template, retaining every internal tree node and the owner of each window vertex. To restrict $J$ to some of its marks, keep their lowest common ancestor and all paths from that ancestor to their owners. Thus restriction removes the strict common-ancestor prefix but does not shorten any path within the retained tree.

A joint template can continue the current state only if its restriction to $B_j$ agrees exactly with the old finite tree. This agreement preserves parent-child relations, colors, and the owner of every old mark. Connectivity partitions are not supplied by the joint template itself; they come from the current state and will be updated using the edges in the new column.

To match the part of $J$ above this old restriction, let $H_{\mathrm{old}}$ be the union in $J$ of all paths from its root to the owners of the old marks. Its nodes above the old LCA must already be common ancestors on the stack. We therefore read their colors from the old LCA's parent towards the root of $J$ and match them against successive pops. A mismatch, or an attempt to pop $\bot$, makes the transition invalid. Once they have been matched, the nodes of $H_{\mathrm{old}}$ are all identified with existing strokes. The remaining nodes are new, so the transition has cost
\[
 |V(J)\setminus H_{\mathrm{old}}|.
\]
In particular, bringing an existing ancestor back into the finite tree does not pay for its stroke again.

Connectivity is updated separately for each node $x$ of $J$. Consider the window vertices whose owners descend from $x$. Among the old-column vertices, begin with the blocks of the stored partition if $x$ is in the old finite tree. If $x$ has just been popped, its old domain is the whole column and forms one block. A newly introduced node has no old-column vertices. Add the new-column vertices, initially as singleton blocks, and join blocks along every new-column vertical edge and every edge across the cut whose endpoints belong to this domain. The resulting partition can be computed by disjoint-set unions.

If the region of $x$ still meets the new column, every resulting component must contain a new-column vertex. Otherwise a component has been left behind and can never be joined to the rest of the region. If the region no longer meets the new column, it is complete: we require exactly one component and remove $x$ from the active description. This removal is permanent. We will call it \emph{retiring} the node.

The new finite tree is the restriction of $J$ to $B_{j+1}$, equipped with the partitions just computed. Nodes of $J$ strictly above its new LCA contain the whole new column and must have a single connectivity block. Push their colors in order from the root of $J$ down to the parent of the new LCA, leaving the nearest ancestor at the top. The common ancestors strictly above the root of $J$ remain untouched throughout. When either LCA is the root of $J$, the corresponding pop or push sequence is empty.

We call this complete passage between adjacent columns a \emph{macro transition}. The term only groups its elementary stack operations into one transition with a single painting cost. The following lemma shows that the connectivity information it uses is sufficient.

\begin{lemma}[Exact connectivity update]
\label{lem:update}
Suppose that an old state and its stack represent a processed prefix satisfying the connectivity invariant. For a compatible joint template, the disjoint-set update computes exactly the connectivity of every joint-node region in the extended prefix. The rejection tests are necessary and sufficient to preserve the invariant for active nodes and to retire only connected regions.
\end{lemma}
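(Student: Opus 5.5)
We prove the lemma by tracing how each region grows when one column is added, and then checking the two rejection tests against Lemma~\ref{lem:support}. Fix a node $x$ of $J$ and write $R=S_x\cap V_{\le j}$ and $R'=S_x\cap V_{\le j+1}=R\cup(S_x\cap B_{j+1})$. By~\eqref{eq:region} and agreement of the restriction of $J$ with the old tree, membership of a window vertex in $S_x$ is read off from $J$: a vertex $v$ lies in $S_x$ exactly when $x$ is an ancestor of $\pi(v)$. The edges of $G[R']$ split into three kinds: edges inside $V_{\le j}$, vertical edges inside $B_{j+1}$, and horizontal edges between $B_j$ and $B_{j+1}$. The update treats the first kind through the stored partition and the other two through explicit unions. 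So the plan is to show that the stored partition is exactly the information the first kind contributes.

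\textbf{Exactness of the update.} First, by the invariant, every component of $G[R]$ meets $B_j$. Hence the components of $G[R']$ are obtained from the components of $G[R]$, each represented by its trace on $D_x$, together with the new vertices of $S_x\cap B_{j+1}$, by joining along the new edges. There are no further edges, because every edge from $V_{\le j}$ leaving the prefix goes to $B_{j+1}$ and is therefore a cut edge. Next, we check that the initial partition is correct in each of the three cases for $x$.
\begin{itemize}
\item If $x$ is in the old finite tree, the stored partition $\rho_x$ is correct by hypothesis.
\item If $x$ was popped, it is a strict common ancestor containing $B_j$. Since $B_j$ is connected, the invariant gives that $G[R]$ is connected, so starting from one block is correct.
\item If $x$ is new, we must argue that $R=\emptyset$. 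A node of $J$ outside $H_{\mathrm{old}}$ is not an ancestor of any old owner, so $D_x\cap B_j=\emptyset$. By interval support, $S_x$ does not meet earlier columns either, provided it has no vertices there. This holds because otherwise the set of columns meeting $S_x$ would skip column $j$ while $S_x$ also meets column $j+1$, contradicting Lemma~\ref{lem:support}.
\end{itemize}
Standard union-find correctness then gives exactly $\operatorname{cc}(G[R'])$ restricted to the window domain. The same argument applies to stack nodes above the root of $J$, which contain both columns. Their regions stay connected because $W_j$ is connected and meets them.

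\textbf{The rejection tests.} We verify both directions.
\begin{itemize}
\item \emph{Active nodes, necessity.} If $x$ still meets $B_{j+1}$, Lemma~\ref{lem:support} applied to the true region $S_x$ shows that every component of $G[R']$ meets $B_{j+1}$. So a valid painting is never rejected.
\item \emph{Active nodes, sufficiency.} If the test passes, every component of $G[R']$ contains a new-column vertex, which is precisely the invariant at column $j+1$.
\item \emph{Retired nodes.} If $x$ does not meet $B_{j+1}$, interval support shows that $S_x$ meets no later column, so $S_x=R'$. Connectivity of $S_x$ then holds exactly when there is a single block. Since every component meets the domain by the invariant, the block count equals the component count. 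Thus the test accepts exactly the connected regions, and retirement is permanent because later columns never meet $S_x$.
\end{itemize}

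\textbf{Main obstacle.} The delicate step is to justify that no component of $G[R]$ is invisible in $D_x$. This is what the no-lost-component invariant provides, and it is why the invariant is used both in computing the initial blocks and in counting components at retirement. We also need that popped nodes and the nodes above the root of $J$ truly contain the whole column, so that using a single block for them is legitimate. This follows from their definition as strict ancestors of the LCA of all column owners.
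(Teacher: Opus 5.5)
Your proposal is correct and follows the paper's proof essentially step for step. It contracts each old component to its trace on $D_x$ via the no-lost-component invariant, uses the same three initialization cases, notes that only new-column and cut edges are added, and handles the two rejection tests and the connected untouched stack prefix as the paper does. One small refinement concerns new and retiring nodes. The paper gets $R=\emptyset$ and $S_x=R'$ directly from the construction: fresh identities own no earlier vertices, and retired identities are never reused. Your appeal to interval support instead presupposes that $S_x$ is already connected, which holds in the completeness direction but not when this lemma is invoked inside the soundness proof.
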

\begin{proof}
Fix a joint node $x$. Contract every component of its old processed region to the block that it induces on the old column. No component is omitted, by the old invariant. If $x$ belongs to the old finite tree, these blocks are exactly its stored partition. If $x$ is a popped common ancestor, its old domain is the entire connected column, and there is one block. If $x$ is newly introduced, it has no old descendant marks and no previously assigned vertices.

The new graph vertices are precisely those in $B_{j+1}$. Every edge incident with them and having its other endpoint in the processed prefix lies inside $B_{j+1}$ or between $B_j$ and $B_{j+1}$. Thus the edges added by the update are exactly all newly available connections inside the region. A path in the extended region projects to a path in this contracted graph. Conversely each old block connection can be expanded into an old path, so every connection found by the disjoint-set structure corresponds to a path in the extended region. The computed components are therefore exact.

If the new domain is nonempty, every computed component must contain a new mark; this is precisely the invariant for the new frontier. If the new domain is empty, the region is about to become inaccessible to all later transitions. Exactly one computed component is necessary and sufficient for its completed region to be connected.

It remains to consider nodes in the untouched stack prefix, which do not occur in $J$. Each such node contains both full columns and its old processed part is connected. The new column is connected and has an edge to the old column, so its extended processed part is connected as well. Hence omitting an explicit partition for this prefix loses no required connectivity information.
\end{proof}

Two aspects of the transition are essential. First, restriction does not contract unmarked internal nodes: it retains their positions and colors. Matching only a suppressed skeleton would allow a transition to change an already paid ancestral path. Second, the old ancestor closure is charged at its first introduction, not each time it appears in a window. The set difference $V(J)\setminus H_{\mathrm{old}}$ counts new nodes even when their colors have appeared elsewhere before.

After the last column, no further connection can be added to a region. We therefore accept only when every partition in the finite tree has one block. Regions represented on the stack are already connected. We then remove all remaining color symbols at zero cost, followed by $\bot$, and enter a unique accepting state. When $n=1$, the same construction consists just of initialization and this final connectivity check.

This defines a weighted pushdown system from the given board: enumerate the decorated column templates and joint templates of size at most $M$, and include exactly the transitions satisfying the conditions above. Each accepting run represents a candidate painting, with one unit paid for each distinct stroke. It remains to minimize that cost without explicitly exploring the possible stacks. Once the minimum accepting cost $d$ is known, we will recover its painting tree and reverse the nonroot strokes to obtain a flooding sequence of length $d-1$.

\subsection{Computing the Minimum Cost}
\label{ssec:pda}
A pushdown system has finitely many control states and stack symbols, but can have infinitely many configurations because its stack is unbounded. Our computation will summarize finite runs between stack boundaries instead of listing those configurations. For this purpose, split each macro into elementary rules with intermediate control states used by that macro alone. A run cannot switch to another macro before finishing the current one. Charge the new-node cost once within each private macro. A failed check cannot complete that macro, so it cannot contribute to an accepting run.

Write $(p,A)\xrightarrow{w}(r,u)$ for a rule replacing top symbol $A$ by $u$, where $|u|\le2$ and $w$ is a nonnegative integer. In the word $BC$, \emph{$B$ is the new top symbol}. Operationally one pops $A$, then pushes $C$ and then $B$. For example $(c,A)$ represents pushing $c$ above the old top $A$. The alphabet is $\{0,1,2,\bot\}$.

For each pair of control states $p,q$ and symbol $A$, let $D[p,A,q]$ be the minimum cost of a finite subrun that starts with stack $A\sigma$ in state $p$ and first exposes the unchanged suffix $\sigma$ in state $q$. Until its last step the stack has the form $u\sigma$ with $u\ne\varepsilon$; no symbol of $\sigma$ is inspected or modified. This definition is independent of the particular suffix. It concerns removal of the whole stack frame, including symbols replacing $A$, not merely the first rule that rewrites the symbol $A$ itself. The rules give the following relaxations:
\begin{align}
(p,A)\xrightarrow{w}(r,\varepsilon):&\quad D[p,A,r]\gets\min\{D[p,A,r],w\},\label{eq:pop}\\
(p,A)\xrightarrow{w}(r,B):&\quad D[p,A,q]\gets\min\{D[p,A,q],w+D[r,B,q]\},\label{eq:replace}\\
(p,A)\xrightarrow{w}(r,BC):&\quad D[p,A,q]\gets\min\{D[p,A,q],w+D[r,B,t]+D[t,C,q]\}.
\label{eq:binary}
\end{align}
The last relaxation is applied for all $t,q$ and the second for all $q$. Initialize every variable to infinity and seed only actual pop rules. Discard values larger than $M$. Every finite update has a finite derivation; a cycle with no pop-based derivation stays infinite. Conversely, the matching push/pop decomposition of any finite subrun gives precisely one of these derivations. Thus the recurrence computes finite-run costs, not arbitrary solutions to cyclic algebraic equations.

\begin{lemma}[Finite-run semantics of the grammar]
\label{lem:grammar}
Finite derivations under~\eqref{eq:pop}--\eqref{eq:binary} have exactly the costs of the frame-removing subruns in the definition of $D$. In particular, restricting all finite values to at most $M$ preserves every accepting run of total cost at most $M$.
\end{lemma}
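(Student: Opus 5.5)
We argue both directions by induction: soundness (every finite value derived by the relaxations is the cost of some frame-removing subrun) by induction on the length of the derivation, and completeness (every frame-removing subrun has a derivation of its cost) by induction on the length of the subrun. Since all rule weights are nonnegative integers, the relaxation procedure is a monotone least-fixed-point computation. A value is finite exactly when it has a finite derivation tree whose leaves are pop rules~\eqref{eq:pop}. This tree view is what we induct on, and it avoids any appeal to solutions of the cyclic equations.

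\emph{Soundness.} A leaf instance of~\eqref{eq:pop} is itself a one-step run from stack $A\sigma$ to $\sigma$. For~\eqref{eq:replace}, prepend the rule $(p,A)\to(r,B)$ to the subrun witnessing $D[r,B,q]$, which by induction exists on every suffix $\sigma$. For~\eqref{eq:binary}, after the rule the stack is $BC\sigma$. Run the $D[r,B,t]$ witness with suffix $C\sigma$; it never inspects $C\sigma$ and ends in state $t$ with stack $C\sigma$. Then run the $D[t,C,q]$ witness with suffix $\sigma$. Until the final step the stack stays of the form $u\sigma$ with $u\neq\varepsilon$, so the concatenation is a frame-removing subrun. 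Its cost is the sum of the costs. Independence from the suffix follows because no step inspects $\sigma$.

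\emph{Completeness.} Take a frame-removing subrun $\gamma$ from $(p,A\sigma)$ to $(q,\sigma)$ and split on its first rule. If that rule is a pop, then $\gamma$ is that single step, since $\sigma$ is exposed for the first time right after it; this case is seeded by~\eqref{eq:pop}. If it replaces $A$ by $B$, the remainder is a shorter $B$-frame-removing subrun, and induction together with~\eqref{eq:replace} applies. If it replaces $A$ by $BC$, let $t$ be the state at the first time the stack equals $C\sigma$. This time exists, because the height must drop below $|BC\sigma|$ before $\sigma$ is exposed, and heights change by at most one per elementary rule. The segment up to that time is a $B$-frame removal over the suffix $C\sigma$, and the rest is a $C$-frame removal over $\sigma$; both are shorter. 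This matched push/pop decomposition is the main point requiring care. In particular, the intermediate states of macros must be treated as ordinary control states, so that splitting a run mid-macro is allowed. The decomposition then yields a derivation of cost exactly $\mathrm{cost}(\gamma)$. Hence the least derived value equals the minimum subrun cost.

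\emph{Truncation.} Every sub-derivation's cost is bounded by the cost of its parent derivation, because weights are nonnegative. An accepting run of total cost at most $M$ decomposes, via the bottom frame $\bot$ that is removed only at acceptance, into a derivation all of whose intermediate values are at most $M$. Discarding values larger than $M$ therefore never removes a term used by such a run. Conversely, discarding values only loses derivations, so soundness still holds. Finally, the finiteness of states and symbols makes the truncated value range $\{0,\ldots,M,\infty\}$ finite. The iteration therefore stabilizes, and the computed values are exactly the finite-run minima capped at $M$.
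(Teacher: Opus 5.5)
Your proof is correct and follows the paper's argument. It uses induction on derivation trees to turn derivations into runs, and a first-rule case analysis with the split at the first moment the stack becomes $C\sigma$ for the converse. Nonnegative weights then give the truncation claim. Your observation that heights change by at most one per rule, while the symbols below the top remain uninspected until the height returns to $|C\sigma|$, states explicitly the existence of that split moment, which the paper only asserts.
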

\begin{proof}
For one direction, induct on the derivation tree. A pop production is one rule exposing the suffix. For a one-symbol replacement, execute the rule and then the child derivation, which removes that replacement symbol without touching the suffix. For a two-symbol replacement $BC$, execute the rule, remove $B$ using the first derivation, and then remove $C$ using the second. The intermediate control state is precisely $t$. The suffix is preserved in each case and costs add.

Conversely, take a finite frame-removing run and inspect its first rule. If the replacement word is empty, that rule is the last step of the run and gives~\eqref{eq:pop}. If it is $B$, the remaining run removes $B$ and gives~\eqref{eq:replace}. If it is $BC$, there is a unique first moment when the part of the stack above this occurrence of $C$ disappears. Before that moment no rule can inspect $C$. Let $t$ be the control state then. The two remaining subruns remove $B$ above suffix $C\sigma$ and $C$ above suffix $\sigma$, respectively. Both are shorter than the original run, so induction gives the two children in~\eqref{eq:binary}. This proves the correspondence.

All rule costs are nonnegative. Every subtree of a derivation of total cost at most $M$ therefore has cost at most $M$. Discarding larger values cannot remove a subderivation needed by a budget-feasible run. The correspondence uses finite runs throughout; no infinite sequence of zero-cost rules can serve as a witness.
\end{proof}

For example, a lone zero-cost rule replacing $A$ by $A$ has no frame-removing run. Its formal self-dependence could admit numerical solutions to an equation such as $z=z$, but the correct value here is infinity. Starting with infinity and seeding only pop productions enforces this interpretation. Zero-cost rules are permitted because matching old ancestors and transferring existing nodes should not create an additional painting charge.

If there are $P$ control states and $R$ rules, at most $4P^2$ variables and $O(RP^2+P^2)$ instantiated relaxation terms suffice. Repeatedly scanning all terms until none changes is polynomial: each variable decreases at most $M+1$ times. For example, a conservative full-scan bound is $O((M+1)P^2(RP^2+P^2))$ arithmetic operations. Integers have $O(\log N)$ bits. This elementary bound is sufficient for the theorem.

To see the termination bound explicitly, replace infinity by a sentinel larger than $M$ and retain only the values $0,1,\ldots,M$. Each variable can strictly decrease at most $M+1$ times. Every nonfinal full scan has at least one strict decrease, so at most $4P^2(M+1)+1$ scans suffice. At a stable assignment, induction on the height of a finite derivation shows that its cost has been accounted for. Conversely every finite value ever written has an actual derivation. The resulting value is therefore the minimum, not just a lower bound or an approximation. One can instantiate all relaxation terms and use $O(RP^2+P^2)$ memory, or generate the terms during each scan.

The entry $D[p_{\mathrm{start}},\bot,p_{\mathrm{end}}]$ is the minimum accepting cost, since an accepting run starts with only the bottom marker and finishes by removing it. The preceding argument therefore gives an exact deterministic computation using only a polynomial number of finite table entries.

\subsection{Correctness and Recovery of an Optimal Strategy}
\label{ssec:correctness}
The local construction must satisfy two global requirements. Every optimal painting must induce an accepting run, so that the algorithm does not miss a better solution. Conversely, every accepting run must assemble into an actual painting, so that local choices cannot produce an artificially small cost. We prove these assertions in turn, then show how to recover legal flooding moves from the assembled tree.

\begin{lemma}[Completeness]
\label{lem:complete}
The constructed pushdown system has an accepting run of cost $p(G,f)$.
\end{lemma}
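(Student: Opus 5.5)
We start from a normalized optimal painting given by Lemma~\ref{lem:canonical}. Its nested regions form a tree $T$ with $p(G,f)$ nodes, and by the discussion after that lemma it is an admissible painting tree in the sense of Definition~\ref{def:admissible}. From $T$ we read off a run of the pushdown system. At each column $j$ the finite state is the template $K_{B_j}$ with owners $\pi|_{B_j}$, decorated by the true partitions $\rho_x$ induced by $G[S_x\cap V_{\le j}]$. The stack holds the colors of the strict ancestors of $\operatorname{LCA}(\pi(B_j))$, bottom to top from root to nearest. For each window $W_j=B_j\cup B_{j+1}$ the joint template is $K_{W_j}$ taken from $T$. The whole argument consists of checking that each of these objects exists in the catalog, that consecutive objects are related by a legal macro transition, and that the charged costs add up to exactly $|V(T)|$.

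\textbf{Membership in the catalog.} By Lemma~\ref{lem:edge} and Lemma~\ref{lem:lca}, every $K_B$ taken from $T$ satisfies the definition of a local template. Its size is at most $|V(T)|\le M$, so Lemma~\ref{lem:enumeration} guarantees that it is generated. The true decorations obey Lemma~\ref{lem:monotone}, hence they are among the permitted decorations. The connectivity invariant holds by Lemma~\ref{lem:support}: every region meeting $B_j$ is connected and has interval support, so none of its prefix components can be lost.

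\textbf{Transitions.} The restriction of $K_{W_j}$ to $B_j$ is exactly $K_{B_j}$. This holds because restriction keeps all internal path nodes, and the LCA of the marks of $B_j$ is the same node in $T$. The nodes of $H_{\mathrm{old}}$ lying above the old LCA are precisely the stack ancestors up to the root of $J$, so the pops match. They never reach $\bot$, since the root of $J$ is itself an ancestor on the stack or the old LCA. Lemma~\ref{lem:update} then shows that the disjoint-set update recomputes the true partitions. Its rejection tests are passed because the true regions are connected: active regions keep the invariant by Lemma~\ref{lem:support}, and retired regions are whole connected regions. After the pushes, the stack is exactly the ancestor word of the new LCA. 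At the first column we push the actual ancestors of $\operatorname{LCA}(\pi(B_1))$. At the end every region is complete and connected, so acceptance succeeds.

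\textbf{Cost accounting.} This is the main obstacle. We must show that each node of $T$ is charged exactly once, and here the interval-support property is the key. Every node is an ancestor of some owner, so it is active at some column. The columns where a node $x$ is active form an interval: it is active exactly when $S_x$ meets that column, and $S_x$ is connected, so Lemma~\ref{lem:support} gives an interval. Consider the first column $j_0$ at which $x$ is active.
\begin{itemize}
\item If $j_0=1$, then $x$ is paid at initialization, either as a pushed ancestor or as a node of $K_{B_1}$.
\item Otherwise $x$ belongs to $J=K_{W_{j_0-1}}$ but not to $H_{\mathrm{old}}$, so the transition into column $j_0$ pays for it.
\end{itemize}
At every later window, $x$ is still an ancestor of an old mark, so it lies in $H_{\mathrm{old}}$ and is not charged again. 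The same holds if it reappears from the stack. Stack nodes above the root of $J$ are never charged. Summing over initialization and all transitions, the total charge is $|V(T)|=p(G,f)$.

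Two routine points need checking. First, retired nodes never reappear; this again follows from interval support. Second, we need that a node in $J$ is identified with the same node of $T$ in both consecutive windows, so that the matching is exact rather than merely color-wise. Both are immediate because all the templates are taken from the single tree $T$.
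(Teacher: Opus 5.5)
Your proof is correct and follows essentially the same route as the paper. You take the normalized optimal tree, read off its column and joint restrictions with their true decorations, and verify each macro via exact restriction, stack matching, and Lemma~\ref{lem:update}; interval support (Lemma~\ref{lem:support}) then shows each node is charged exactly once, at the first column its region meets. Your explicit checks that such a node lies in $J\setminus H_{\mathrm{old}}$ and that the pops stop at the root of $J$ before reaching $\bot$ spell out steps the paper states only briefly.
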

\begin{proof}
Take the normalized optimum tree from Lemma~\ref{lem:canonical}, which has at most $M$ nodes. Its column and joint restrictions belong to the catalog by the edge and alternating-chain lemmas. Its true prefix partitions are among the enumerated decorations. A connected region cannot disappear from the frontier and later return: a path between its past and future vertices must meet every intervening column. More specifically, an active processed component cannot lose all frontier vertices. Hence the connectivity update never rejects these true restrictions. For each node $x$, Lemma~\ref{lem:support} says that the columns meeting $S_x$ form an interval. At its first such column, $x$ is introduced; at every later column in the interval it remains in the ancestor closure of the current owners. Thus it can move between the finite tree and the stack, but cannot be retired and then introduced again. Read the common ancestors into the initial stack and then read the successive joint restrictions. The required suffixes match, each restriction preserves all old parent-child edges, and the update computes the true new partitions. Each node is charged once, either at initialization or when first introduced by a macro. All final regions are connected, so cleanup accepts at total cost $p(G,f)$.
\end{proof}

\begin{lemma}[Soundness]
\label{lem:sound}
Every accepting run of cost $d$ yields a painting plan with $d$ strokes.
\end{lemma}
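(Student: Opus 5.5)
The plan is to assemble, from an accepting run, one global rooted colored tree $T$ with owner map $\pi$, and then to check that it is an admissible painting tree in the sense of Definition~\ref{def:admissible} with $|V(T)|=d$. Admissible trees are painting plans of their stated cost, so this proves the lemma.

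\textbf{Building the node set.} Replay the run macro by macro, and give every node a global identity when the run pays for it.
\begin{itemize}
\item \emph{Initialization.} The pushed common ancestors and the first template nodes receive fresh identities. Each costs one.
\item \emph{Each macro.} The nodes of $H_{\mathrm{old}}$ are identified with existing identities. Old finite-tree nodes are identified through exact agreement of the restriction of $J$ with the old canonical arrays. Popped ancestors are identified through stack matching: the stack cell popped corresponds to the identity recorded when that cell was pushed, so each stack cell carries a global identity along with its color.
\item \emph{New nodes.} The nodes of $V(J)\setminus H_{\mathrm{old}}$ receive fresh identities, at exactly the charged cost.
\end{itemize}
Parent pointers are inherited from the templates. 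Since restriction keeps all internal nodes and matching preserves parent-child edges, the parent of any identity never changes after creation. So the union is a forest, and in fact a tree rooted at the bottom-most stack node or initial root; the final pops cost zero and add nothing. Each vertex $v\in B_j$ gets $\pi(v)$ from the template at its column. The joint-template agreement guarantees that column $j$'s owners are the same in consecutive windows. Summing the charges gives $|V(T)|=d$.

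\textbf{Checking admissibility.} Colors of owners agree with $f$ by the template definition. Parent-child colors differ, both inside templates and at the template-root/stack-top interface, which initialization and pushes enforce. Every graph edge lies in some window $W_j$ or column. Its owners are therefore comparable, and they satisfy the path-color condition, inside that template. Because the template contains the full internal path, including unmarked nodes, this path is the true path in $T$. Nonemptiness of $S_x$ holds because every node is, when created, an ancestor of some owner in its window.

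\textbf{Connectivity of each $S_x$ (the main obstacle).} I would argue in three steps.
\begin{itemize}
\item \emph{Interval of columns.} A node, once retired, is never reintroduced, since new nodes get fresh identities. So the columns where $x$ is active form an interval.
\item \emph{Induction.} By induction using Lemma~\ref{lem:update}, the stored partition at each active column equals the true component structure of $S_x\cap V_{\le j}$ in the assembled tree, and the no-lost-component invariant holds. For stack nodes, the connectivity argument at the end of that lemma applies.
\item \emph{Conclusion.} At retirement, or at the final acceptance check, exactly one component is required. Hence $G[S_x]$ is connected.
\end{itemize}
The subtle point is that $S_x$ as defined by \eqref{eq:region} in the assembled $T$ coincides, column by column, with the domains $D_x$ used by the run. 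This needs the descendant relation in $T$ restricted to a window to equal the one in its template. That holds because parent pointers are fixed, and ancestor closures of owners appear fully in the template plus stack.

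Finally, ordering parents before children gives a painting plan with $d$ strokes.
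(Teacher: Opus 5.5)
Your proposal is correct and follows essentially the same route as the paper. Both arguments give fresh identities exactly to the charged nodes $V(J)\setminus H_{\mathrm{old}}$, and identify old nodes through exact restriction and stack matching so that parent pointers and owners never change. Both read the edge constraints off the window containing each edge. Both get connectivity of every region from the exact update, the no-lost-component test, and the single-block checks at retirement and at acceptance. Your explicit observations that a retired node can never become active again, and that the domains $D_x$ agree column by column with the regions of the assembled tree, spell out what the paper's induction uses implicitly.
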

\begin{proof}
We construct a single painting tree by following the accepting run. At initialization, give distinct identities to the nodes of the first finite tree and to the chain above it. Inductively, suppose the current finite tree and stack have been identified with nodes of the tree constructed so far, and that all previously assigned owners and parent-child relations remain unchanged.

In the next macro, exact restriction identifies the old finite nodes in $J$, while the popped stack positions identify the old ancestors above them. These are precisely the nodes of $H_{\mathrm{old}}$. In particular, the root of $J$ already exists, and its parent, if present, is fixed by the untouched stack prefix. Restriction retains every internal node on the old root-to-owner paths, so it cannot insert a new ancestor between existing nodes. We may therefore attach each remaining node of $J$ as a new node with a fresh identity, keeping all old parent-child relations. This preserves the induction hypothesis and the unique root.

Pushing a node back onto the stack changes only where it is represented; it does not create another node. A retired identity is never used again. Hence exactly one global node is created for each unit of cost. Equal colors cause no ambiguity, because the old identities are determined by positions in the tree restriction and stack, not by color alone. These identities are assigned during reconstruction and need not be stored in the finite control.

This constructs one global rooted tree. Each grid vertex receives its owner on the introduction of its column, and later restrictions preserve that owner. Every created node has a descendant owner, so its region in \eqref{eq:region} is nonempty. Inductively the stored partitions describe precisely the connectivity of the processed parts: all past connections are represented by the old partition, and every new graph edge is either vertical in the new column or horizontal across the cut. The no-lost-component test excludes disconnected retired pieces. A region retiring in a macro is connected; regions surviving to the last column are connected by the acceptance test. Stack regions contain the connected column and satisfy the same invariant. Thus all regions in the assembled tree are connected. Painting parents before children produces $f$, with one stroke per created node and therefore exactly $d$ strokes.

In addition, every edge of $G$ occurs in a column or a joint window. Its owners and the entire path between them are retained there, so Lemma~\ref{lem:edge}'s constraints hold for that edge. Later macros neither change these owners nor insert nodes on an existing tree edge. The constraints therefore hold in the global tree. Parent-child colors are distinct, including at the join with an untouched stack prefix, where the tree edge already existed before the macro.
\end{proof}

\begin{lemma}[Direct reversal]
\label{lem:direct-reversal}
Let $T$ be a rooted tree with distinct colors on parent-child pairs, an owner map $\pi$ satisfying $c(\pi(v))=f(v)$, and nonempty connected regions $S_x$ defined by~\eqref{eq:region}. Suppose that every graph edge satisfies the constraint of Lemma~\ref{lem:edge}. Painting $T$ in any parent-before-child order can be reversed, except for its root stroke, by $|V(T)|-1$ legal flooding moves.
\end{lemma}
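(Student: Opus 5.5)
The plan is to run the painting forward, record the intermediate colorings, and check that undoing each nonroot stroke is one legal flooding move. Fix a parent-before-child order $x_1,\ldots,x_m$ of $V(T)$, where $x_1$ is the root and $m=|V(T)|$. Let $g_i$ be the coloring after painting $x_1,\ldots,x_i$, each with region $S_{x_i}$ and color $c(x_i)$. The flooding sequence will go from $g_m$ to $g_{m-1},\ldots,g_1$, one move per step. Since $g_1$ is the constant coloring $c(x_1)$, this gives $m-1$ moves.

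The first step is a description of $g_i$. For a vertex $v$ and an index $i$, let $\tau_i(v)$ be the deepest node among $x_1,\ldots,x_i$ that is an ancestor of $\pi(v)$. The root is always such a node, so $\tau_i(v)$ is defined. By \eqref{eq:region}, a stroke $x$ covers $v$ exactly when $x$ is an ancestor of $\pi(v)$. These strokes are painted in ancestral order, so $g_i(v)=c(\tau_i(v))$. In particular $g_m(v)=c(\pi(v))=f(v)$, so the reversal starts at the target coloring. Since ancestors precede descendants and no descendant of $x_i$ is painted before $x_i$, we get $\tau_i(v)=x_i$ exactly for $v\in S_{x_i}$. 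For those same vertices, $\tau_{i-1}(v)$ is the parent $p$ of $x_i$, and $c(p)\neq c(x_i)$. Outside $S_{x_i}$, $\tau_i$ and $\tau_{i-1}$ agree. So $g_i$ and $g_{i-1}$ differ exactly on $S_{x_i}$: $g_i$ has the constant color $c(x_i)$ there, and $g_{i-1}$ has the constant color $c(p)$, which is different.

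The remaining task, and the main obstacle, is to show that $S_{x_i}$ is a maximal monochromatic connected component of $g_i$. Then recoloring it to $c(p)$ is a legal move from $g_i$ to $g_{i-1}$. Connectivity holds by hypothesis. For maximality, take an edge $vw$ with $v\in S_{x_i}$ and $w\notin S_{x_i}$; we must show $g_i(w)\neq c(x_i)$. By the comparability part of Lemma~\ref{lem:edge}, $\pi(v)$ and $\pi(w)$ are comparable. Because $\pi(v)$ lies below $x_i$ while $\pi(w)$ does not, $\pi(w)$ is a strict ancestor of $x_i$. Every ancestor of $x_i$ is painted by step $i$, so $\tau_i(w)=\pi(w)$ and $g_i(w)=f(w)=c(\pi(w))$. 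The path-color condition of Lemma~\ref{lem:edge} says that every node from the child of $\pi(w)$ down to $\pi(v)$ avoids the color $c(\pi(w))$. The node $x_i$ lies on this path, so $c(x_i)\neq g_i(w)$, which gives maximality.

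Composing the reversed steps for $i=m,m-1,\ldots,2$ gives $|V(T)|-1$ legal flooding moves from $f$ to a monochromatic coloring. The root stroke is the one not reversed, as the statement says. Nothing here needs each node to own a vertex directly. The argument therefore applies to every admissible tree, in particular to the trees reconstructed in Lemma~\ref{lem:sound}.
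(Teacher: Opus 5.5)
Your proposal is correct and follows the paper's proof. Both arguments show that immediately after the stroke at $x$ the region $S_x$ is an entire maximal monochromatic component: the edge constraint forces every outside neighbor's owner to be a strict ancestor of $x$, so that neighbor already has its final color, and the path-color condition excludes that color from $c(x)$. Your explicit bookkeeping via $\tau_i$ only makes precise the paper's remarks that $S_x$ carries the parent's color just before the stroke and that the reversal restores exactly these intermediate states.
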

\begin{proof}
Fix a nonroot node $x$ and consider the state immediately after its forward stroke. Immediately before this stroke all of $S_x$ has the parent's color: ancestors have been painted, descendants have not, and incomparable regions are disjoint. Afterwards $S_x$ has color $c(x)$. For a boundary edge $uv$ with $v\in S_x$ and $u\notin S_x$, the edge constraint makes $\pi(u)$ a strict ancestor of $x$. Hence the last stroke ever covering $u$ has already occurred, and its current color is $f(u)=c(\pi(u))$. The path-color exclusion gives $f(u)\ne c(x)$. Since $S_x$ is connected, it is therefore an entire maximal monochromatic component immediately after the stroke. Reversing the strokes in reverse order restores precisely these states. Recoloring $S_x$ to its parent's different color is a legal move and undoes the stroke exactly, even if this recoloring merges $S_x$ with neighboring components. After all nonroot strokes have been undone, the board has the root's color.
\end{proof}

By Lemmas~\ref{lem:complete} and~\ref{lem:sound}, the minimum accepting cost is exactly $p(G,f)$, so subtracting one gives $F(G,f)$ by Preliminary Theorem~\ref{thm:painting}. Notice the roles of the two inequalities: reconstructed legal strategies rule out an artificially small answer, while completeness rules out missing a better strategy.

Store a rule witness whenever a grammar value is improved. A budget-feasible accepting run contains at most $M$ initialization pushes, $n-1$ macros of $O(M)$ elementary steps, and at most $M$ cleanup pops. It thus has length $O(nM+M)$. Recover its node identities and regions as in the soundness proof, and apply Lemma~\ref{lem:direct-reversal}. If the restriction maps computed during the construction are retained with each macro, identity assembly takes $O(nM)$ time: each macro visits its $O(M)$ nodes and transfers the indicated identities. On a grid the region sets can be constructed by walking the ancestors of each owner in $O(NM)$ time. Component searches and recoloring for the at most $M-1$ recovered moves also take $O(NM)$ time. Thus this reconstruction procedure, including the legality checks, takes $O(N^2)$ time after the accepting run has been extracted. Normalization was needed to show that an optimal painting appears among the accepting runs. Reconstruction itself uses only the accepted tree and the direct-reversal lemma.

\subsection{Polynomial Running-Time Bound}
\label{ssec:complexity-3xn}
We now combine the bounds established above. A column has $O(M^4)$ possible tree templates, and the single connectivity cutoff contributes a further factor of $O(M)$. There are therefore $O(M^5)$ decorated column states at each position. A two-column window has $O(M^{10})$ joint templates. Although the two frontiers each have many possible states, their choices are not independent: the joint tree already fixes the old undecorated tree.

It follows that each joint template is paired with only $O(M)$ old decorations. Over the whole board this gives $O(nM^{11})$ macros. Each expands to $O(M)$ elementary operations, so the numbers $P$ of control states and $R$ of rules are both $O(nM^{12}+nM^5)=O(N^{13})$. The control states store the column, finite-tree data, and a position within a private macro, not a stack word or a stack-height enumeration. Together with the finite grammar bound, this proves polynomial running time for $k=3$. For concreteness, substituting $P,R=O(N^{13})$ and $M=N$ into the full-scan bound gives $O(N^{66})$ arithmetic operations, or $O(N^{66}\log(N+1))$ bit operations on the bounded costs. This conservative bound establishes polynomiality; improving its exponent is a separate question.

\subsection{Optimal Paintings Can Require Unbounded Tree Depth}
\label{sec:stack-example}
The use of a stack is not merely an allowance for redundant painting histories. Even an optimal tree can require arbitrarily many common ancestors above a single column.

\begin{proposition}
\label{prop:unbounded-stack}
For $m\ge1$, put $a_i=i\bmod3$ and let
\[
 w_m=a_0a_1\cdots a_{m-1}a_m a_{m-1}\cdots a_1a_0.
\]
Color every row of $P_3\square P_{2m+1}$ by $w_m$. The free optimum is $m$. In every optimal tree satisfying the hypotheses of Lemma~\ref{lem:direct-reversal}, the owner of the middle column has exactly $m$ strict ancestors. In particular, an optimal run of our algorithm needs stack depth $m$, excluding the bottom marker.
\end{proposition}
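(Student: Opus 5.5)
The plan is to view the board as a path of $2m+1$ monochromatic columns. The upper bound $F\le m$ and the lower bound $F\ge m$ come from a distance argument in the component graph. The structural claim then follows from a rigidity analysis of optimal sequences, transferred to trees through Lemma~\ref{lem:direct-reversal}.

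Since every row equals $w_m$, every column is monochromatic. Consecutive letters of $w_m$ differ, because $a_i\ne a_{i\pm1}$ modulo $3$. So the maximal monochromatic components are exactly the $2m+1$ columns, and the component graph is a path with end columns $B_1$ and $B_{2m+1}$ at distance $2m$.

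\textbf{Upper bound.} Recolor the middle column $B_{m+1}$ from $a_m$ to $a_{m-1}$. This merges it with both neighbours and produces a board of the same shape with word $w_{m-1}$ on $2m-1$ effective columns. Repeating $m$ times reaches the monochromatic board of color $a_0$. Hence $F(G,f)\le m$.

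\textbf{Lower bound.} I would prove the following general fact: one flooding move decreases the component-graph distance between two fixed vertices by at most $2$. Suppose the move recolors component $X$ to color $c$. The new component is $X$ together with the old components of color $c$ adjacent to $X$, and any two of these are joined through $X$ by an old path of length at most $2$. Hence every new component-graph path lifts to an old path that is at most $2$ longer, and a shortest new path enters the merged component at most once. The distance between the end columns is $2m$ and must reach $0$, so at least $m$ moves are needed. By Preliminary Theorem~\ref{thm:painting}, $p(G,f)=m+1$.

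\textbf{Structural claim.} Take an optimal tree $T$ satisfying the hypotheses of Lemma~\ref{lem:direct-reversal}. Then $|V(T)|=m+1$, and reversing the strokes in the order given by the lemma yields a flooding sequence of exactly $m$ moves. In it, each move must decrease $d(B_1,B_{2m+1})$ by exactly $2$.

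I would show by induction on $m$ that on the board for $w_m$ the only such move recolors the middle column to $a_{m-1}$.
\begin{itemize}
\item Every component is a single column, so a move recolors one column $B_i$.
\item A decrease of $2$ requires both neighbours of $B_i$ to receive the new color and to lie on the end-to-end path. This forces $i$ to be interior with $f$ equal on $B_{i-1}$ and $B_{i+1}$.
\item In $w_m$ this happens only at $i=m+1$, since elsewhere the word is strictly monotone in index modulo $3$ with three distinct consecutive letters.
\end{itemize}
The resulting board is the $w_{m-1}$ instance, with columns merged but the component graph again a path, so the induction continues.

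Translating back to trees, the first reversed move undoes the deepest stroke. So the last forward stroke has region exactly $B_{m+1}$ with color $a_m$, and its parent has color $a_{m-1}$. Inductively, the strokes painted in reverse order have regions $B_{m+1-t}\cup\cdots\cup B_{m+1+t}$ for $t=m,m-1,\dots,0$, with each nested in the previous. So $T$ is a path of $m+1$ nodes, and the owner of every middle-column vertex is its leaf, which has exactly $m$ strict ancestors.

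\textbf{Stack depth.} Just after the middle column is processed, these $m$ ancestors all contain the whole column, so they lie strictly above the column LCA. The completeness construction of Lemma~\ref{lem:complete} therefore places them on the stack, giving depth $m$ above $\bot$.

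\textbf{Main obstacle.} The delicate step is the rigidity: turning ``each move decreases the distance by exactly $2$'' into the uniqueness of the region of every stroke. This needs care because Lemma~\ref{lem:direct-reversal} permits any parent-before-child order, so in principle different orders could produce different sequences. I would handle this by applying the uniqueness argument to every admissible linear extension. Since the tree has only $m+1$ nodes, each node has at most one child, so $T$ is a path and the order is unique.
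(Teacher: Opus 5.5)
Your proposal is correct and follows essentially the paper's proof. The per-move decrease of at most two forces the middle move to $a_{m-1}$ at every stage; your end-to-end distance in the component graph is just the paper's component count minus one on this path quotient. Direct reversal then identifies the nonroot regions with the nested central intervals, all containing the middle column, so every node is an ancestor of the middle owner and $T$ is a chain.

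Two small repairs are needed. First, ``since the tree has only $m+1$ nodes, each node has at most one child'' is a non sequitur, and it is unnecessary: the regions $S_x$ depend only on $T$, so any single parent-before-child order already identifies them. Second, to show that \emph{every} optimal run needs depth $m$, you should cite Lemma~\ref{lem:sound} rather than Lemma~\ref{lem:complete}, which only exhibits one such run. By soundness, every optimal run assembles into such a tree, and its stack after the middle column holds exactly the strict ancestors of the middle owner.
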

\begin{proof}
Initially each column is one monochromatic component. Every subsequent component is an interval of whole columns, so the component quotient is always a path. One move decreases its number of vertices by at most two. Starting with $2m+1$ components therefore requires at least $m$ moves. Recolor the middle component to $a_{m-1}$. This merges it with its two neighbors and leaves precisely the quotient word $w_{m-1}$. Repetition achieves the bound.

Any $m$-move solution must decrease the component count by two at every move. In $w_m$ the middle component is the only component with two neighbors of the same color: on either side, three consecutive colors are pairwise distinct. Thus the first changed component and its new color are forced. Induction on $m$ shows that the sequence of changed components is forced and consists of strictly nested central intervals. The choice of a vertex within the current component is immaterial.

An optimal painting tree has $m+1$ nodes, by painting equivalence. Direct reversal turns its nonroot regions, in reverse painting order, into these $m$ forced intervals. Together with the full-board root, they are strictly nested and all contain the middle column. Every node therefore lies on the path to that column's owner; no branch is possible. The finite restriction to the middle column is its sole owner, leaving exactly $m$ common ancestors on the stack.
\end{proof}

\begin{figure}[t]
\centering
\begin{tikzpicture}[x=0.52cm,y=0.52cm,
 every node/.style={font=\normalsize}]
 \foreach \r in {0,1,2} {
   \foreach \c/\v in {0/0,1/1,2/2,3/0,4/2,5/1,6/0} {
     \draw[gray] (\c,-\r) rectangle ++(1,-1);
     \node at (\c+0.5,-\r-0.5) {\v};
   }
 }
 \draw[very thick] (3,0) rectangle (4,-3);
 \node[anchor=north] at (3.5,-3.2) {$w_3$: three moves};
 \node at (9.4,-1.5) {$\longleftrightarrow$};
 \foreach \x/\v in {12/0,14/1,16/2,18/0} {
   \node[draw,circle,minimum size=0.58cm] (t\x) at (\x,-1.5) {\v};
 }
 \draw (t12)--(t14)--(t16)--(t18);
 \node[anchor=south] at (12,-0.65) {root};
 \node[anchor=south] at (18,-0.65) {middle owner};
 \draw[thick] (11.3,-2.4)--(16.7,-2.4);
 \node[anchor=north] at (14,-2.55) {three strict ancestors};
\end{tikzpicture}
\caption{A row-constant instance with a forced nested optimum. The numbers are colors, so the figure does not require color printing. The rightmost tree node is the finite middle-column template; its three ancestors are stored on the stack.}
\label{fig:stack}
\end{figure}

This family does not exclude other polynomial algorithms or every possible finite-state compression. It specifically rules out bounding the common ancestor depth in an optimal painting tree by a constant depending only on the strip height. The pushdown recurrence avoids such a bound.

\section{Generalization: 3-Free-Flood-It on \texorpdfstring{$k\times n$}{k x n} Boards}
\label{sec:generalization}
We now prove Theorem~\ref{thm:main} for any fixed height $k$. The number of colors remains three; it is not replaced by $k$. Columns contain $k$ marks and adjacent-column windows contain $2k$ marks. Both induced graphs are connected, so normalization, comparability, and the alternating-chain lemma apply without change.

The local tree enumeration extends directly. Equation~\eqref{eq:catalog} gives $O_k((M+1)^{2k-2})$ column templates and $O_k((M+1)^{4k-2})$ joint templates. Skeleton enumeration uses only the $k$ or $2k$ window marks; constants in these bounds may depend on $k$. The direct descendant-set construction works on these windows as well.

The change in height matters principally for connectivity. With three rows a single cutoff was enough, but a larger column can meet a connected region in several pieces. We must therefore count the possible sequences of connectivity partitions. On the internal nodes of a skeleton chain, the domain $D\subseteq B_j$ is constant. Towards ancestors its partitions only coarsen, hence change strictly at most $|D|-1\le k-1$ times. The possible partition sequences have a number depending only on $k$, but their breakpoint positions do \emph{not}: those positions have $O_k((M+1)^{k-1})$ choices per chain. There are at most $2k-2$ chains and $2k-1$ key nodes. To make the count explicit, a partition of a $k$-element set has at most $k^k$ choices. A coarsening sequence can be described by at most $k-1$ merges of pairs of blocks, together with a grouping of those merges into its successive strict changes. This gives $k^{O(k)}$ possible sequences on a fixed domain; recording the change positions gives $O_k((M+1)^{k-1})$ choices on a chain. The key-node partitions and the sequence choices over all chains contribute a factor $2^{O(k^2\log(k+1))}$ independent of $M$.

Consequently, for a fixed column template the number of decorations is at most
\[
 C(k)(M+1)^{(2k-2)(k-1)}.
\]
Multiplying by the column-template bound gives at most $C(k)(M+1)^{2k^2-2k}$ decorated column states. Here and below $C(k)$ may increase from one occurrence to the next and is bounded by $2^{O(k^2\log(k+1))}$. The single-cutoff representation from Section~\ref{ssec:connectivity} is specific to three rows. For general height it is replaced by the monotone partition sequences just described.

With this larger family of decorated states, transitions are defined exactly as before. A joint tree identifies the existing strokes and determines the new cost; connectivity is updated on its at most $2k$ window vertices for each tree node. Stack regions still contain an entire connected column, so their processed parts need no partition information beyond the no-lost-component invariant. The stack alphabet remains $\{0,1,2,\bot\}$. Enumerating joint templates and decorated old states, and expanding each macro into $O(M)$ elementary operations, gives $N^{O(k^2)}$ states and rules. More explicitly, a joint template determines its old undecorated restriction. It therefore pairs with only the decorations of that restriction, rather than with the whole column catalog. There are at most
\[
 C(k)(M+1)^{(4k-2)+(2k-2)(k-1)}
   = C(k)(M+1)^{2k^2}
\]
macros per cut. Initialization, cleanup, and macro expansion can all be bounded by
\[
 P,R\le C(k)n(M+1)^{2k^2+1}.
\]
The elementary connectivity computations use sets of at most $2k$ marks for each of at most $M$ joint nodes. Computing their ancestor lists and restrictions adds only a polynomial factor. The finite grammar and its full-scan evaluation from Section~\ref{ssec:pda} therefore take $N^{O(k^2)}$ time, including construction. There is no search over all stack words.

For uniformity of this notation, the finite choices of skeletons, mark assignments, colors, and partition sequences contribute at most $2^{O(k^2\log(k+1))}$. Since a nonempty $k\times n$ grid has $N\ge k$, this factor can be absorbed into $N^{O(k^2)}$ (with the trivial $k=N=1$ case handled separately). The algorithm is thus one deterministic construction taking $k$ as input, with an exponent depending on $k$; it is an XP algorithm in height.

Completeness and soundness use only connected windows, exact restriction, the separator property, and the recorded partitions, so their proofs apply verbatim with $k$ rows. Witness recovery also remains polynomial within this bound. This proves Theorem~\ref{thm:main}.

\subsection{Extension to Graphs with Connected Layers}
The proof uses more than bounded separator size, but does not require every edge of a rectangular grid. The following extension isolates the properties actually needed.

In the terminology of tree-partitions~\cite{wood2009treepartition}, the layers below form a partition whose quotient is a path, with the additional requirement that every part induce a connected graph. This is a stronger condition than having a path decomposition of small width. A path decomposition permits overlapping bags and does not require their induced subgraphs to be connected.

\Needspace{13\baselineskip}
\begin{corollary}
\label{cor:layers}
Let $G$ be a nonempty graph supplied with an ordered partition $B_1,\ldots,B_t$ of its vertices into connected layers of size at most $h$. Suppose that every edge lies within a layer or joins consecutive layers, and that every pair of consecutive layers is joined by at least one edge. For a coloring with at most three colors, the minimum number of free flooding moves and an optimal sequence can be computed in deterministic $N^{O(h^2)}$ time. The ordered partition is part of the input.
\end{corollary}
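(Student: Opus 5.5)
The plan is to rerun the proof of Theorem~\ref{thm:main} with columns replaced by the supplied layers. We then check, ingredient by ingredient, which property of the grid each step actually used. The ingredients are:
\begin{itemize}
\item normalization (Lemma~\ref{lem:canonical}) and the edge constraint (Lemma~\ref{lem:edge});
\item the alternating-chain lemma and the catalog bound~\eqref{eq:catalog};
\item direct enumeration (Lemma~\ref{lem:enumeration});
\item interval support and no lost component (Lemma~\ref{lem:support});
\item monotone decorations (Lemma~\ref{lem:monotone});
\item the exact connectivity update (Lemma~\ref{lem:update});
\item completeness, soundness, and direct reversal.
\end{itemize}

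Normalization, the edge constraint, and direct reversal hold for arbitrary connected graphs. $G$ is connected because each layer is connected and consecutive layers are joined by an edge. The local lemmas only need the window $G[B]$ to be connected. Each layer $B_j$ is connected by hypothesis. The joint window $W_j=B_j\cup B_{j+1}$ is connected because both layers are connected and at least one edge joins them. This is exactly where the extra hypothesis of the corollary enters. With $|B_j|\le h$ and $|W_j|\le 2h$, the template counts become $O_h((M+1)^{2h-2})$ and $O_h((M+1)^{4h-2})$. Enumeration over at most $2h$ marks is unchanged.

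Next come the separator arguments. In Lemma~\ref{lem:support}, interval support used only that every edge stays within a layer or joins consecutive layers. The no-lost-component claim used only that $B_j$ separates $V_{<j}$ from $V_{>j}$, and this again follows from the edge hypothesis. In Lemma~\ref{lem:update}, the edges added when $B_{j+1}$ is appended are exactly the edges inside $B_{j+1}$ and between $B_j$ and $B_{j+1}$. This holds by the same hypothesis, so the disjoint-set update is exact. For the untouched stack prefix, such a node contains all of $B_j$ and all of $B_{j+1}$. Its old processed part is connected. The new layer is connected and joined to the old one by some edge, so the extended part stays connected. Hence stack symbols still need only a color.

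The decoration count is handled as in Section~\ref{sec:generalization}, with $k$ replaced by $h$. Along a chain the domain is fixed and partitions only coarsen, with at most $h-1$ breakpoints. This gives $C(h)(M+1)^{2h^2-2h}$ decorated states. Pairing each joint template with the decorations of its old restriction gives $C(h)(M+1)^{2h^2}$ macros per cut. The totals are therefore $P,R\le C(h)\,t\,(M+1)^{2h^2+1}$ with $M=N$. Here $C(h)=2^{O(h^2\log(h+1))}$, and since $N\ge h$ it is absorbed into $N^{O(h^2)}$.

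The grammar evaluation of Section~\ref{ssec:pda} and Lemma~\ref{lem:grammar} are graph-independent. Layers of different sizes cause no difficulty, because the layer index is part of the control state.

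I expect the main obstacle to be making sure no grid-specific fact is hidden in these arguments. Two places need care. The first is the three-row single-cutoff decoration, which we simply drop in favor of monotone partition sequences. The second is the reconstruction time bound of $O(NM)$ for computing regions. That bound holds for any graph by walking the ancestors of each owner, and the component searches cost $O(|E|)$ per move with $|E|=O(h^2N)$. So recovery stays polynomial and yields an optimal flooding sequence via Lemma~\ref{lem:direct-reversal}.
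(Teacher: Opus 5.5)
Your proposal is correct and follows essentially the same route as the paper's proof. Both use the layers as frontiers, rely on layer and window connectivity for the skeleton and alternating-chain arguments, use edge locality for the separator property and the exact disjoint-set update, use the consecutive-layer edge to keep untouched stack prefixes connected, and get the $N^{O(h^2)}$ count from the $h-1$ coarsening bound. One small fix: $N\ge h$ need not hold when $h$ is only an upper bound on layer sizes, so replace $h$ by the actual maximum layer size before absorbing $C(h)$ into $N^{O(h^2)}$. Your $O(h^2N)$ edge bound is valid, though $O(hN)$ holds since every degree is below $3h$.
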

\begin{proof}
The assumptions make each layer and each adjacent-layer window connected. They also make every layer a separator between earlier and later layers. Use these layers as frontiers. The skeleton and alternating-chain proofs apply to at most $h$ or $2h$ marks, using their actual induced edges. For any chain the common domain has at most $h$ vertices, so its connectivity partitions have at most $h-1$ strict coarsenings. The catalog bound is consequently $N^{O(h^2)}$.

Initialize with the components induced inside $B_1$. In a transition, add exactly the edges inside $B_{i+1}$ and between $B_i$ and $B_{i+1}$. The locality assumption ensures that these are all newly introduced edges. The separator property justifies the same no-lost-component test. A common ancestor contains the whole connected layer, so its processed part is connected. For an untouched stack prefix this remains true without an explicit partition update: its region contains both layers, which are joined by an edge by hypothesis. Exact restriction, node charging, and stack matching therefore give the same completeness and soundness proofs. The same four-symbol grammar and budget $M=N$ finish the algorithm. Finally $G$ has $O(Nh)$ edges, so the direct-reversal checks remain polynomial within the claimed bound.
\end{proof}

This includes a rectangular strip with some horizontal edges deleted, provided that all vertical edges remain and every consecutive pair of columns retains at least one edge between them. It also includes adding edges within each column. The corollary does not supply an algorithm for finding a suitable ordering or minimizing its width.

For clarity, the underlying width bounds are elementary. A path decomposition is a sequence of bags covering all vertices and edges, with the bags containing any fixed vertex forming an interval; its width is the largest bag size minus one. If $t\ge2$, the bags $B_i\cup B_{i+1}$ give a path decomposition of width at most $2h-1$. For $t=1$, the single bag has width at most $h-1$. Listing the vertices layer by layer also gives a linear ordering in which the ends of every edge differ in position by at most $2h-1$, and hence a bandwidth bound of $2h-1$. The correspondence between pathwidth and vertex separation is classical~\cite{kinnersley1992vertex}. These bounds locate the supplied partition among familiar graph layouts, but they do not remove its connectedness requirement.

The roles of the hypotheses also explain the limits of the proof. With four colors, excluding one still leaves three colors; the parent-child inequality permits exponentially many chain words. Thus the present catalog argument no longer proves a polynomial bound, consistent with the strip hardness in~\cite{meeks2012graphs}. Arbitrary grid subgraphs may have disconnected columns, in which case the stack would need additional connectivity information not covered by this proof. A wraparound edge within one column is allowed by Corollary~\ref{cor:layers}; edges joining the first and last columns in an ordering of more than two columns violate its locality assumption.

\subsection{A Structural Restriction on Component Quotients}
\label{ssec:quotient-cycles}
Let $Q$ be obtained from a complete rectangular grid by partitioning its vertices into connected parts, contracting each part, and replacing parallel edges by single edges. All adjacencies between distinct parts are retained. Initial monochromatic components, as well as the components after any flooding sequence, give such a partition. This operation is more restrictive than taking an arbitrary minor, which may also delete edges.

The cycle space of a graph is the vector space over $\mathbb F_2$ consisting of its even-degree edge sets, with symmetric difference as addition. For connected $Q$ its dimension is $\beta=|E(Q)|-|V(Q)|+1$. An edge is a bridge if deleting it disconnects the graph.

\begin{proposition}[Short cycles generate the quotient cycle space]
\label{prop:short-cycle-space}
The cycle space of $Q$ over $\mathbb F_2$ is generated by its triangles and quadrilaterals, which need not be induced. In particular, if $\beta=|E(Q)|-|V(Q)|+1$, then $Q$ has at most $4\beta$ non-bridge edges. If $Q$ is a cactus, meaning that each edge lies in at most one simple cycle, every cycle has length three or four.
\end{proposition}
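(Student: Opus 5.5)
The plan is to transfer the standard fact that the cycle space of the complete grid $P_k\square P_n$ is generated by its unit squares (the bounded faces of the planar embedding) through a chain map to $Q$. Let $\phi:V(G)\to V(Q)$ send each grid vertex to its part. Define an $\mathbb F_2$-linear map $\Phi$ from grid edge sets to edge sets of $Q$. A grid edge $uv$ with $\phi(u)=\phi(v)$ is sent to $0$. Otherwise it is sent to the single quotient edge $\phi(u)\phi(v)$. Since all adjacencies between distinct parts are retained and parallel edges are identified, this is well defined. I will check that $\Phi$ commutes with the boundary maps: an internal edge has boundary $u+v$ mapped to $2\phi(u)=0$. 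Hence $\Phi$ sends even-degree edge sets to even-degree edge sets.

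Next, $\Phi$ restricted to the grid cycle space is surjective onto the cycle space of $Q$. Fix a simple cycle $x_1x_2\cdots x_\ell x_1$ in $Q$. For each quotient edge $x_ix_{i+1}$, choose a grid edge $u_iv_i$ with $u_i\in x_i$ and $v_i\in x_{i+1}$. Join $v_{i-1}$ to $u_i$ by a path inside the connected part $x_i$. The result is a closed walk in the grid, and its mod-2 edge set $Z$ lies in the grid cycle space. The internal path edges vanish under $\Phi$ and each chosen edge maps to its quotient edge, so $\Phi(Z)$ is the given cycle. Simple cycles span the cycle space, so $\Phi$ is onto. Writing $Z$ as a sum of unit squares then expresses every cycle of $Q$ as a sum of the images $\Phi(\square)$.

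The step needing the most care is computing the image of a single unit square with corners $a,b,c,d$ in cyclic order; this case analysis is where I expect the main obstacle. If the four corners lie in four distinct parts, the image is a quadrilateral of $Q$. With three parts, there are two subcases. If the shared part is at two adjacent corners, one edge contracts and the image is a triangle. If it is at opposite corners, for instance $\phi(a)=\phi(c)$, the image consists of the edges $\phi(a)\phi(b)$ and $\phi(a)\phi(d)$, each counted twice, so it is $0$. With two parts, the image is an edge counted twice or zero, hence $0$; with one part it is $0$. The quadrilaterals and triangles obtained here need not be induced, which matches the statement. This proves generation.

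For the counting claim, the triangles and quadrilaterals span a space of dimension $\beta$, so a basis of $\beta$ of them can be chosen. Let $U$ be the union of their supports, so $|U|\le4\beta$. An edge is a non-bridge exactly when it lies on a simple cycle. That cycle is a sum of basis elements, so its edges lie in $U$. For the cactus claim, the simple cycles of a cactus are pairwise edge-disjoint, hence linearly independent. Counting gives exactly $\beta$ of them, so they form a basis, and every element of the cycle space has a unique representation as a sum of them. A simple triangle or quadrilateral is itself one of these cycles. If some simple cycle $C$ had length at least five, its basis representation would be $C$ alone. It would also be a sum of short cycles, each of which is a basis cycle different from $C$. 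This contradicts uniqueness of the representation, so every cycle has length three or four.
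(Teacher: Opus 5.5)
Your proposal is correct and follows essentially the same route as the paper: the same map $\Phi$ that kills intra-part edges, surjectivity by lifting simple cycles through connected parts, generation of the grid cycle space by unit squares, and the same basis arguments for the $4\beta$ bound and the cactus claim. The only variation is that you compute the image of a unit square by explicit case analysis, whereas the paper just observes that an even-degree edge set on at most four edges of a simple graph is empty, a triangle, or a quadrilateral. You should also note that $Q$ is connected, being a quotient of the connected grid, so its cycle space has dimension $\beta$.
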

\begin{proof}
Map each grid edge within a part to zero, and each edge between distinct parts to the corresponding edge of $Q$; extend this map linearly over $\mathbb F_2$. An even-degree edge set maps to an even-degree edge set, because summing degrees over a part cancels its internal edges. The induced map of cycle spaces is surjective. Indeed, lift a simple cycle of $Q$ by choosing a representative grid edge for each of its edges and joining consecutive representatives by paths within their connected parts. The resulting closed walk maps to the chosen cycle.

The boundaries of the unit squares generate the cycle space of the rectangle. They are independent: for a nonempty collection of squares, a square in a highest occupied row contributes a top edge that cannot cancel. Their number is $(k-1)(n-1)$, the dimension of the grid cycle space. Each boundary maps to an even-degree edge set supported on at most four edges of the simple graph $Q$. Its image is therefore zero, a triangle, or a quadrilateral. Surjectivity proves the first assertion, including the acyclic one-row or one-column cases.

Choose a basis of $\beta$ such short cycles. Every non-bridge edge appears in some cycle and hence in at least one basis cycle; otherwise its coordinate would vanish throughout the cycle space. Their union has at most $4\beta$ edges. Finally, in a cactus the simple cycles are edge-disjoint and each supplies its own independent cycle-space direction. A cycle longer than four cannot be generated by the triangles and quadrilaterals in other edge-disjoint cycles, so none can occur.
\end{proof}

The retention of all adjacencies is necessary. Deleting the middle rung of the $2\times3$ ladder leaves a six-cycle, whose nonzero cycle-space vector is not generated by triangles or quadrilaterals. Thus the proposition does not assert that the family of component quotients is closed under arbitrary minors. In particular, a reduction built on a graph that occurs as a minor of a strip must still account for the additional edges present in the full board.

\section{Parameterized Algorithms and Approximation}
\label{sec:approximation}
This section gives an explicit algorithm for a bounded number of moves and uses it to obtain a deterministic approximation scheme. Both results allow any fixed palette size $c$ and are independent of the three-color painting-tree algorithm.

\subsection{An Explicit Algorithm Parameterized by the Move Budget}
\label{ssec:budget}
Parameterized tractability with respect to the move budget and a structural width parameter is already known: Belmonte, Khosravian Ghadikolaei, Kiyomi, Lampis, and Otachi~\cite[Theorem 4.1]{belmonte2019how} prove fixed-parameter tractability for the combined parameter consisting of the budget and clique-width. We give a direct history-based construction on strips to make the dependence on the budget explicit. The construction uses a path decomposition, a standard framework for bounded-width algorithms~\cite{bodlaender1996constructive}, and does not require connected bags.

\begin{proposition}[Explicit budget bound]
\label{prop:budget}
Let $B\in\mathcal B_{k,n,c}$ and let $b\ge0$ be a move budget. There is a deterministic algorithm that either returns a shortest flooding sequence of length at most $b$ or correctly reports that none exists. Writing $N=kn$ and $\operatorname{Bell}(s)$ for the number of partitions of an $s$-element set, its running time is at most
\[
 O\!\left(N\operatorname{poly}(b+1,k,c)
       \left[c\,2^{k+3}\operatorname{Bell}(k+1)\right]^b\right).
\]
In particular, for fixed $k$ and $c$ the running time is $2^{O_{k,c}(b)}N$. The algorithm is fixed-parameter tractable in the combined parameter $(k,c,b)$.
\end{proposition}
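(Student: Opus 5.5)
We will design a dynamic program over the columns of the board that guesses, for every vertex, its complete participation history in a hypothetical flooding sequence of length at most $b$. Fix the sequence $\mu_1,\ldots,\mu_b$ (padded with dummy steps if shorter). For each move $t$ we guess in advance the new color $\gamma_t\in\{1,\ldots,c\}$; this global \emph{color word} contributes the factor $c^b$. For each vertex $v$, its \emph{history} is the subset $H_v\subseteq\{1,\ldots,b\}$ of moves whose recolored component contains $v$. Given $f(v)$ and the color word, $H_v$ determines the color of $v$ after every move. The legality of move $t$ is a local and a global condition. Locally, for every edge $uv$, if exactly one endpoint is in move $t$ then the two endpoints have different colors just before move $t$; otherwise the recolored set would not be a maximal monochromatic component. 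Moreover, all vertices in move $t$ share one color before move $t$, and that color differs from $\gamma_t$. Globally, the set $\{v:t\in H_v\}$ must be nonempty and connected. Finally, all colors after step $b$ must be equal.

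The scan proceeds column by column, using the path decomposition with bags $B_j\cup B_{j+1}$; it is cleaner to process one vertex at a time in row-major order along columns, which is a standard path decomposition of width $k$. The state at a separator records the histories of the at most $k+1$ frontier vertices. Per-move history bits give at most $2^{(k+1)b}$ choices, which is too many for the stated bound. The key observation is that we need not store every subset. Instead, for each move $t$ we store only local information about the frontier set $X_t=\{v\in\text{frontier}:t\in H_v\}$:
\begin{itemize}
\item which frontier vertices lie in $X_t$ (a subset),
\item the partition of $X_t$ induced by connectivity in the processed prefix (bounded by $\operatorname{Bell}(k+1)$),
\item flags stating whether move $t$ has already started, whether it has finished, and whether its common pre-move color is fixed.
\end{itemize}
This gives roughly $2^{k+1}\operatorname{Bell}(k+1)\cdot 4$ possibilities per move, matching the base $c\,2^{k+3}\operatorname{Bell}(k+1)$ after folding in the color word. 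The state is thus a product over $t$, and the number of states is $[c\,2^{k+3}\operatorname{Bell}(k+1)]^b$ up to polynomial factors.

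A transition introduces a new vertex $v$ and guesses $H_v$. It checks the edge conditions against its already-processed neighbors; these lie in the frontier, since the order is row-major along columns. It then updates each per-move partition by union-find, and forgets the vertex leaving the frontier. When a vertex leaves, any block of $X_t$ that loses its last frontier representative while other parts of move $t$ are still active (or may still start) is rejected. If it was the only block, move $t$ is marked finished and can never receive vertices again. This is the no-lost-component test of Lemma~\ref{lem:support}, applied per move. The move colors are consistent because the color of $v$ before move $t$ is a function of $f(v)$, $H_v\cap\{1,\ldots,t-1\}$, and the color word. At the end, we accept if every used move is started and connected and all final colors agree; the common final color is tracked as one more flag. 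Minimizing the number of nondummy moves over accepting runs, with witness back-pointers, returns a shortest sequence.

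The step I expect to be the main obstacle is making the state count genuinely of the form $\alpha^b$ rather than $2^{kb}$ times extra factors. Done naively, guessing $H_v$ per transition costs $2^b$. Correctness also needs care: a sequence of moves with these local checks must truly be legal, which requires that the chosen set for move $t$ is exactly a maximal monochromatic connected component at time $t$. The local boundary-edge check together with global connectivity gives exactly this. I would argue the per-move factorization explicitly: transitions act independently on each move's coordinate except through the color-consistency test. The total work is then $N$ times the number of states times a factor $\operatorname{poly}(b,k,c)$ per coordinate, with the $2^b$ guess absorbed by processing moves coordinate-wise (guessing membership bit by bit). This yields the claimed $2^{O_{k,c}(b)}N$ bound.
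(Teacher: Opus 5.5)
Your plan is the paper's construction: a target-color word, a binary participation history per vertex, the column-by-column, top-to-bottom path decomposition with bags of at most $k+1$ vertices, and per move a connectivity partition of the selected bag vertices with a finished flag and the no-lost-component rejection. The correctness outline is sound, but the running-time accounting misidentifies the obstacle.

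The $2^{(k+1)b}$ history assignments of a bag are not too many. $[2^{k+1}]^b$ is already a factor of $[c\,2^{k+3}\operatorname{Bell}(k+1)]^b$. Your per-move subsets $X_t$ encode exactly those histories, so the ``key observation'' stores the same information. What must stay small is the flag data.

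One flag per move suffices, because a move is active exactly when its partition is nonempty. A flag saying that the pre-move color ``is fixed'' cannot by itself make the selected set monochromatic. The paper instead imposes an interior edge rule: for an edge $uv$, $t\in H_u\cap H_v$ forces equal colors just before move $t$. Together with connectivity, this gives a common color at no cost to the state.

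Given the histories, each move then has at most $2\operatorname{Bell}(k+1)$ partition--flag choices. There are therefore at most $[2^{k+2}\operatorname{Bell}(k+1)]^b$ states. The naive factors $2^b$ for the new vertex's history and $c^b$ for the word then give exactly the stated base.

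With your count of four flag combinations, the stated base is reached only through the bit-by-bit guessing, which you assert without constructing the intermediate states. It can be made to work if bits are guessed in move order, since the color of $v$ before move $t$ depends only on bits already chosen. However, it is unnecessary.
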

\begin{proof}
We first decide whether exactly $t\ge1$ legal moves suffice. Enumerate their target-color word $d_1,\ldots,d_t\in\{0,\ldots,c-1\}$. Assign each vertex $v$ a binary history $x(v)\in\{0,1\}^t$, where $x_i(v)=1$ means that move $i$ changes $v$. Put $S_i=\{v:x_i(v)=1\}$. The history and the target word determine every intermediate color by
\[
 a_0(v)=f_B(v),\qquad
 a_i(v)=
 \begin{cases}
 d_i,&x_i(v)=1,\\
 a_{i-1}(v),&x_i(v)=0.
 \end{cases}
\]
We require $a_{i-1}(v)\ne d_i$ whenever $x_i(v)=1$, and require $a_t(v)=d_t$ for every vertex. For every edge $uv$ and every $i$, we impose the local conditions
\begin{align*}
 x_i(u)=x_i(v)=1&\ \Longrightarrow\ a_{i-1}(u)=a_{i-1}(v),\\
 x_i(u)\ne x_i(v)&\ \Longrightarrow\ a_{i-1}(u)\ne a_{i-1}(v).
\end{align*}
Finally, every $S_i$ must be nonempty and connected.

These conditions characterize legal $t$-move solutions exactly. Necessity follows by recording a solution. Conversely, connectivity and the first edge condition make $S_i$ monochromatic immediately before move $i$. The second edge condition excludes an edge from $S_i$ to an outside vertex of that same color. A monochromatic path leaving $S_i$ would have such an edge, so $S_i$ is an entire maximal monochromatic component. The vertex condition makes its recoloring legal. Induction on $i$ now shows that the actual intermediate colors are the $a_i$, and the final vertex condition makes the board monochromatic. This argument permits moves that do not merge any components.

Process vertices column by column and, within each column, from top to bottom. On introducing $(r,j)$, check its edges to the preceding column and to $(r-1,j)$ when those neighbors exist, and then forget $(r,j-1)$. Finish by forgetting the last column. The resulting path decomposition has bags of size at most $k+1$ and $O(N)$ elementary steps. Every edge is checked while both endpoints are present, and no forgotten vertex has a neighbor that has yet to be introduced.

For a fixed target word, a state assigns the full history $x(v)$ to each vertex in the current bag. For each move $i$, it also records a partition of the selected bag vertices: two belong to the same block precisely when they are connected through already processed vertices of $S_i$. A Boolean flag records whether a completed component of $S_i$ has already disappeared from the bag. An empty partition with flag zero means that $S_i$ has not started; an empty partition with flag one means that it has finished. A nonempty partition with flag one is forbidden.

On introducing a vertex, try all its histories, check its vertex constraints and all newly available edge constraints, and update each partition by adding the vertex when selected and joining it to its selected neighbors. Reject if it is selected for a move whose flag is already one. On forgetting a vertex, restrict each partition to the remaining bag. If a block loses its last representative while another block remains, reject: the lost component can never connect to that other component through future vertices. If the sole block disappears, set its flag to one. At the final empty bag, require every flag to be one. This enforces both nonemptiness and connectivity of every $S_i$.

To justify merging equal states, all constraints involving only forgotten vertices have already been checked. Any future edge to the processed graph ends in the current bag, whose histories specify all colors relevant to that edge. The partitions retain every connection through the processed graph that future vertices could use; the flags record precisely the selected sets that can no longer acquire vertices. Thus equal states have identical possible continuations. Induction over the decomposition proves that the dynamic program accepts exactly the history assignments characterized above.

A bag of size at most $k+1$ has at most $2^{(k+1)t}$ history assignments. Once these histories are fixed, each move contributes at most $2\operatorname{Bell}(k+1)$ choices for its partition and flag. Hence there are at most $[2^{k+2}\operatorname{Bell}(k+1)]^t$ states. Trying a newly introduced history adds a factor $2^t$, and enumerating target words adds $c^t$. Each update takes time polynomial in $t$ and $k$, giving the stated bound for fixed $t$. Checking $t=0$ separately and then $t=1,\ldots,b$ preserves that bound up to a polynomial factor. We may replace $b$ by $\min\{b,N-1\}$ before the search. Backtracking supplies every $S_i$; choosing any vertex in $S_i$ recovers move $i$. The first successful value of $t$ is optimal.
\end{proof}

The bounded number of moves is essential to this state description. A vertex history has $b$ bits, and the state retains a connectivity partition for every move. Substituting $b=N-1$ therefore gives an exponential algorithm, even for a fixed-height strip. This is why Proposition~\ref{prop:budget} does not replace the painting-tree construction.

\subsection{A Deterministic Approximation Scheme for Any Fixed Palette}
\label{ssec:approximation}
We combine the explicit budget algorithm with the additive-approximation approach of Meeks and Scott~\cite[Theorem 4.8 and Corollary 4.10]{meeks2012graphs}. Their two-vertex linking algorithm computes, in $O(N^3|E|c^2)$ time, the minimum number of free moves needed to place any two specified vertices in one monochromatic component. The following argument makes the completion bound and the recovery of a strategy explicit.

For a coloring $f$ of the grid, let $L(f)$ be the minimum number of moves needed to obtain a monochromatic component meeting both the first and last columns. Equivalently, minimize the two-vertex linking cost over one endpoint in each of those columns, with the resulting color unrestricted. Thus $L(f)\le F(G,f)$ and $L(f)$ is computable in polynomial time.

\begin{lemma}[Completion after a crossing]
\label{lem:crossing-completion}
On a complete $k\times n$ board with $c$ available colors, a monochromatic component meeting both end columns can be extended to the whole board in at most $(c-1)(k-1)$ further moves.
\end{lemma}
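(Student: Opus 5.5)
The plan is to grow the given component by flooding it repeatedly, and to measure progress column by column using vertical distance inside each column.

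Let $X$ be the monochromatic component meeting both end columns. Since $X$ is connected and every grid edge stays within a column or joins consecutive columns (the interval-support argument of Lemma~\ref{lem:support}), $X$ meets every column $B_j$. For each vertex $v=(r,j)$, define its \emph{vertical distance} $\delta(v)$ as the least $|r-r'|$ over vertices $(r',j)\in X$. Then $\delta(v)\le k-1$ for every $v$, and $\delta(v)=0$ exactly when $v\in X$. The strategy uses only moves applied to the component containing $X$. Such moves never shrink it, so distances can only decrease.

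The key step is a \emph{round}: $c-1$ consecutive moves that recolor the current big component $X$ to each of the $c-1$ colors other than its present color $a_0$, one at a time and in any fixed order. I would then prove that after one round, every vertex that was vertically adjacent to the old $X$ belongs to the new component. Take such a vertex $u$ and let $b=f_{\mathrm{cur}}(u)$ be its color at the start of the round.
\begin{itemize}
\item If $b=a_0$, then $u$ is already in $X$, because $X$ is a maximal monochromatic component and $u$ is adjacent to it.
\item Otherwise $b$ is one of the colors used in the round. Before that step, $u$ is either already absorbed, or still has color $b$, since only the big component is ever recolored. In the latter case $u$ is still adjacent to $X$, so recoloring $X$ to $b$ merges $u$ into it.
\end{itemize}
Absorbed vertices stay absorbed, since later moves recolor the whole component. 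Hence one round decreases $\delta(v)$ by at least one for every $v\notin X$. The reason is that the vertex one step closer to $X$ in the same column, lying between $v$ and its nearest $X$-vertex, is either in $X$ or is now absorbed.

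Iterating $k-1$ rounds brings every $\delta(v)$ to zero, so the whole board is one component after at most $(c-1)(k-1)$ moves. I would also check the following details.
\begin{itemize}
\item Each move is legal, because it changes a maximal monochromatic component to a different color.
\item Moves may be skipped once the board is monochromatic, so the bound is an upper bound.
\item The cases $k=1$ and $c=1$ are trivial, since $X$ is already the whole board or no move is needed.
\end{itemize}

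The main obstacle is the round lemma. The point needing care is that a neighbor's color is not disturbed before its turn within the round, and this is exactly where freezing all non-$X$ vertices (by moving only $X$) is used.
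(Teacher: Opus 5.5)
Your proof is correct and matches the paper's argument: each round cycles the growing component through the other $c-1$ colors and absorbs every neighbor present at the round's start, and $k-1$ rounds suffice because $X$ meets every column. Your per-column vertical distance is a concrete form of the paper's graph distance to $X$. One wording fix in your last step: the vertex that gets absorbed should be the neighbor of $v$'s nearest $X$-vertex on the side towards $v$, which is adjacent to $X$, not $v$'s own neighbor.
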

\begin{proof}
Let $A$ be that component. Its connectedness implies that it meets every column, so every vertex is at graph distance at most $k-1$ from $A$. At the beginning of one round, let the current color of $A$ be $a$. Recolor its growing component successively with each of the other $c-1$ colors. Every outside neighbor present at the start of the round has a color different from $a$. Unless already absorbed, it keeps its color until that color is selected and is then absorbed. A round therefore absorbs every neighbor of the component present at its start. After $k-1$ rounds all distance layers have been absorbed. If the board becomes monochromatic sooner, stop. The cases $k=1$ or $c=1$ require no further moves.
\end{proof}

There is also a direct way to construct the initial crossing without assuming a witness version of the linking algorithm. Whenever $L(f)>0$, enumerate all legal moves and recompute $L$ after each move. At least one successor has value $L(f)-1$, by taking the first move of a shortest sequence to the crossing goal. Choose such a successor and repeat. This takes at most $N-1$ iterations, with at most $Nc$ candidates per iteration. Since a grid has $O(N)$ edges, the time is $O_c(N^6)$, including the linking computations; it yields an actual legal sequence. With $C=(c-1)(k-1)$, appending Lemma~\ref{lem:crossing-completion} produces a feasible cost $A$ satisfying
\begin{equation}
\label{eq:additive-completion}
 F(G,f)\le A\le L(f)+C\le F(G,f)+C.
\end{equation}
The additive-approximation principle is the one from~\cite{meeks2012graphs}; excluding the current color from each round gives the displayed constant $C$.

\begin{corollary}[EPTAS on fixed-height strips]
\label{cor:eptas}
For every fixed $k,c\ge1$ and every $0<\varepsilon\le1$, there is a deterministic algorithm that returns a legal flooding sequence for $B\in\mathcal B_{k,n,c}$ of length at most $(1+\varepsilon)m(B)$ in time
\[
 O_{k,c}\!\left(N^6+2^{O_{k,c}(1/\varepsilon)}N\right).
\]
The polynomial exponent in $N$ is independent of $\varepsilon$; thus this is an efficient polynomial-time approximation scheme (EPTAS).
\end{corollary}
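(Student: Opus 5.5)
We combine the crossing-based additive strategy of \eqref{eq:additive-completion} with the exact budget search of Proposition~\ref{prop:budget}. Set $C=(c-1)(k-1)$ and $b^\ast=\lfloor C/\varepsilon\rfloor$. The algorithm has two phases.

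\textbf{First phase.} Construct the additive solution. Compute $L(f)$ with the Meeks--Scott linking algorithm, minimizing over one endpoint in each end column. Then extract an actual crossing sequence by the greedy successor procedure described before \eqref{eq:additive-completion}. Finally append the at most $C$ completion moves of Lemma~\ref{lem:crossing-completion}. This yields a legal sequence of length $A\le m(B)+C$ in $O_{k,c}(N^6)$ time. The completion rounds also take polynomial time; $O(kcN)$ component recolorings suffice.

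\textbf{Second phase.} Run the algorithm of Proposition~\ref{prop:budget} with budget $b^\ast$. Since $b^\ast=O_{k,c}(1/\varepsilon)$, this takes $2^{O_{k,c}(1/\varepsilon)}N$ time.
\begin{itemize}
\item If it returns a sequence, that sequence is a shortest one, so it has length exactly $m(B)\le(1+\varepsilon)m(B)$. Output it, or output whichever of the two sequences is shorter.
\item Otherwise $m(B)\ge b^\ast+1>C/\varepsilon$, so $C<\varepsilon m(B)$. Hence $A\le m(B)+C<(1+\varepsilon)m(B)$, and we output the additive sequence.
\end{itemize}
The edge case $m(B)=0$ is caught by the budget search, which checks $t=0$. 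This case matters, because the additive bound alone would not be multiplicatively good there.

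\textbf{Running time and correctness.} Adding the two phases gives $O_{k,c}(N^6+2^{O_{k,c}(1/\varepsilon)}N)$. The exponent of $N$ does not depend on $\varepsilon$, so the scheme is an EPTAS. Legality holds in both branches. The budget algorithm returns legal moves by Proposition~\ref{prop:budget}. The additive sequence is legal because every greedy step is an enumerated legal move and the completion rounds recolor whole components.

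\textbf{Main obstacle.} The delicate step is justifying the $O_{k,c}(N^6)$ bound for extracting the crossing sequence. There are up to $N-1$ iterations, each with $O(Nc)$ candidate moves. Each candidate needs a recomputation of $L$, which requires linking costs over $O(k^2)$ endpoint pairs at $O(N^3|E|c^2)$ each. For fixed $k$ and $c$, with $|E|=O(N)$, this multiplies out to $O_{k,c}(N^6)$.

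Two further points need care:
\begin{itemize}
\item The greedy step relies on some successor having value exactly $L(f)-1$. This holds because the first move of an optimal crossing sequence produces such a successor, and a single move can lower $L$ by at most one.
\item $L(f)\le m(B)$ holds because any flooding to a monochromatic board ends with a component containing both end columns.
\end{itemize}
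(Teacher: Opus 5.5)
Your proposal is correct and follows the paper's proof essentially step for step. You build the additive solution of \eqref{eq:additive-completion} in $O_{k,c}(N^6)$ time, run Proposition~\ref{prop:budget} with budget $\lfloor C/\varepsilon\rfloor$, and return the additive sequence when no short solution exists, since then $C<\varepsilon m(B)$; the only cosmetic difference is that your $t=0$ check absorbs the monochromatic and $C=0$ cases, which the paper handles separately.
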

\begin{proof}
Handle monochromatic boards directly. If $C=0$, equation~\eqref{eq:additive-completion} already gives an optimal strategy. Otherwise set $T=\lfloor C/\varepsilon\rfloor$ and apply Proposition~\ref{prop:budget} with budget $T$. If it finds a solution, return the shortest one it finds, which is globally optimal. If it reports that none exists, then the integer optimum satisfies $F(G,f)>T$ and hence $F(G,f)>C/\varepsilon$. Return the additive solution instead. Its cost is at most $F(G,f)+C<(1+\varepsilon)F(G,f)$. The budget algorithm takes $2^{O_{k,c}(1/\varepsilon)}N$ time, and constructing the additive solution takes $O_c(N^6)$ time.
\end{proof}

This combination makes the approximation guarantee independent of the three-color painting argument. It also applies when $k\ge3$ and $c\ge4$, where exact optimization is NP-hard~\cite{meeks2012graphs}. An FPTAS would be a stronger statement: its running time must be polynomial in both $N$ and $1/\varepsilon$. On any class of connected boards, such a deterministic scheme returning a feasible integer cost would give an exact polynomial-time algorithm by taking $\varepsilon=1/N$, since $F(G,f)\le N-1$. We consequently make no FPTAS claim for those NP-hard cases. For the three-color fixed-height class, an FPTAS follows trivially from the exact algorithm.

\subsection{Random Sampling with an Explicit Success Guarantee}
\label{ssec:randomized-approximation}
The deterministic scheme already succeeds with probability one. The following variant makes actual random choices and gives a separate guarantee for sampling target-color words. It uses the fixed-word feasibility test in the proof of Proposition~\ref{prop:budget}, which takes $2^{O_{k,c}(t)}N$ time for a word of length $t$.

\begin{proof}[Proof of Theorem~\ref{thm:main-randomized}]
First construct the additive strategy of length $A$ from equation~\eqref{eq:additive-completion}, and put $C=(c-1)(k-1)$. Return it immediately if $A=0$ or $C=0$. Otherwise let $T=\lfloor C/\varepsilon\rfloor$. For each integer $t=1,\ldots,\min\{T,A-1\}$ in increasing order, sample independently and uniformly, with replacement, $R_t=\lceil c^t\ln(1/\delta)\rceil$ words from the palette's $c^t$ words of length $t$. Run the exact fixed-word feasibility test on each sampled word. On the first success, return the recovered sequence. If every test fails, return the additive strategy.

Every returned sequence is legal, because it is either verified by the exact test or supplied by the constructive additive algorithm. Every accepted sampled sequence has length smaller than $A$, so all outcomes retain the additive bound $A\le m(B)+C$.

Write $q=m(B)$. If $q>T$, then $q>C/\varepsilon$, and every outcome has cost at most $q+C<(1+\varepsilon)q$. If $q=A$, the returned solution is optimal regardless of sampling. In the remaining case, $1\le q\le\min\{T,A-1\}$. No test at a smaller budget can succeed, so the algorithm reaches budget $q$. There is at least one feasible target word of length $q$. The probability of missing all feasible words at that budget is at most
\[
 (1-c^{-q})^{R_q}\le \exp(-R_qc^{-q})\le\delta.
\]
On hitting one, the algorithm returns an optimal sequence. No union bound over budgets is needed: only the sampling at the true optimum matters for this guarantee.

For fixed $k,c$, the number of samples and the cost per test sum to $2^{O_{k,c}(T)}N\log(2/\delta)$. Sampling and reading the words add at most polynomial factors in $T$, absorbed by the exponential term. Adding the $O_c(N^6)$ time to construct the fallback strategy proves the claimed bound. The probability calculation assumes independent uniform samples from the finite palette.
\end{proof}

For fixed $k,c$ and fixed $\varepsilon$, the running time is polynomial in the board size and logarithmic in the inverse failure probability. This is a randomized approximation scheme, not an FPRAS: the displayed dependence on $1/\varepsilon$ is exponential. The deterministic EPTAS remains the stronger worst-case guarantee at the same exponential dependence on accuracy.

\section{Conclusion and Open Problems}
\label{sec:conclusion}
The proof rests on a distinction between two parts of a painting tree. A connected column distinguishes only a bounded number of marked and branching nodes. Between them, the three-color restriction forces alternating chains, which have polynomially many descriptions. The common ancestors above these nodes obey no corresponding bound on depth. Retaining them on a stack allows the algorithm to remember a shared stroke for as long as it is needed, while the finite recurrence optimizes over that memory. Together with the connectivity invariant, this yields the exact algorithm for every fixed height.

The argument leaves open whether the dependence on height can be improved to fixed-parameter tractability, with running time $g(k)N^{O(1)}$. Even for height three, the exponent obtained here is large, so reducing the number of local descriptions or finding a more efficient recurrence would be valuable. A further question is whether connected layers can be replaced by weaker structural conditions. Such an extension would have to retain sufficient connectivity information for common ancestors whose intersection with a layer is disconnected.

Sequence compression in algorithms for width parameters provides a useful point of comparison. The typical-sequence method of Bodlaender and Kloks~\cite{bodlaender1996constructive}, revisited by Bodlaender, Jaffke, and Telle~\cite{bodlaender2023typical}, removes redundant information while preserving the relevant extension and merge relations. Here the alternating chains still carry lengths and connectivity breakpoints, and the common ancestor chain retains a word on a stack. An improvement based on further compression would require an analogous preservation theorem for these colored histories; the numerical compression rules for width profiles cannot simply be applied to them.

The supplementary results separate the role of a short strategy from that of a narrow board. For fixed height and palette size, the history-based algorithm has single-exponential dependence on the move budget and linear dependence on the number of vertices. Combining it with a constructive additive approximation gives an EPTAS even when four or more colors are allowed. Neither statement supplies an exact polynomial-time algorithm for an unbounded move budget in those harder cases.

These questions concern the efficiency and reach of the method, rather than the unrestricted problem: three colors already give NP-hardness when grid height is unbounded, and four colors give NP-hardness on fixed-height strips of height at least three \cite{clifford2012complexity,meeks2012graphs}. The present result concerns free flooding; the fixed-pivot three-color strip problem is not resolved by this proof.

\label{main-text-end}

\bibliographystyle{alphaurl}
\bibliography{references}

\end{document}